\documentclass[11pt]{article}
\pdfminorversion=4
\usepackage[utf8]{inputenc}
\usepackage{verbatim}
\usepackage{appendix}
\usepackage{amsmath}
\usepackage{amsfonts}
\usepackage{amssymb}
\usepackage{amsthm}
\usepackage{graphicx}
\usepackage{natbib}
\usepackage{color}
\usepackage{caption}
\usepackage{subcaption}
\usepackage[affil-it]{authblk}
\usepackage{algpseudocode, algorithm, algorithmicx}
\usepackage[left=0.8in,right=0.8in,top=1in,bottom=1in]{geometry}

\newcommand{\beps}{{\boldsymbol \epsilon}}

\newcommand{\x}{\mathbf{x}}
\newcommand{\y}{\mathbf{y}}
\newcommand{\X}{\mathbf{X}}
\newcommand{\argmin}{\operatornamewithlimits{arg\,min}}

\newcommand{\bSigma}{ \boldsymbol{\Sigma}}
\newcommand{\var}{\mathrm{var}}
\newcommand{\E}{\mathbb{E}}
\newcommand{\A}{\mathcal{A}}
\newcommand{\Z}{\mathbf{Z}}
\renewcommand{\b}{\mathbf{b}}
\newcommand{\bsigma}{\boldsymbol{\sigma}}
\newcommand{\bbeta}{\boldsymbol{\beta}}
\newcommand{\btau}{\boldsymbol{\tau}}
\newcommand{\bP}{\mathbf{P}}
\newcommand{\z}{\mathbf{z}}
\renewcommand{\H}{\mathbf{H}}
\newcommand{\supp}{\mathrm{supp}}

\newcommand{\w}{\mathbf{w}}
\renewcommand{\c}{\mathbf{c}}
\providecommand{\keywords}[1]{\textbf{\textit{Keywords---}} #1}

\makeatletter
\newcommand{\distas}[1]{\mathbin{\overset{#1}{\kern\z@\sim}}}%
\makeatother

\newtheorem{theorem}{Theorem}
\newtheorem{proposition}[theorem]{Proposition}
\newtheorem{lemma}{Lemma}[section]

\author{Arend Voorman\thanks{voorma@uw.edu}}
\author{Ali Shojaie}
\author{Daniela Witten}
\affil{University of Washington}

\title{Inference in High Dimensions with the Penalized Score Test}
\begin{document}

\maketitle
\def\spacingset#1{\renewcommand{\baselinestretch}%
{#1}\normalsize} \spacingset{1}
\abstract{
In recent years, there has been considerable theoretical development regarding variable selection consistency of penalized regression techniques, such as the lasso. However, there has been relatively little work on quantifying the uncertainty in these selection procedures. In this paper, we propose a new method for inference in high dimensions using a score test based on penalized regression. In this test, we perform penalized regression of an outcome on all but a single feature, and test for correlation of the residuals with the held-out feature. This procedure is applied to each feature in turn. Interestingly, when an $\ell_1$ penalty is used, the sparsity pattern of the lasso corresponds exactly to a decision based on the proposed test. Further, when an $\ell_2$ penalty is used, the test statistic corresponds precisely to a score statistic in a mixed effects model, in which the effects of all but one feature are assumed to be random. We formulate the hypothesis being tested as a compromise between the null hypotheses tested in simple linear regression on each feature and in multiple linear regression on all features, and develop reference distributions for some well-known penalties. We also examine the behavior of the test on real and simulated data.
}

\keywords{lasso, large $p$ small $n$, model selection, $p$-value, penalized estimation, significance}

\spacingset{1.5}
%%%%%%%%%%%%%%%%%%%%%%%%%%%%%%%%%%%%%%%%%%%%%%%%%%%%%%%%%%%%%%%%%%%%
%%%%%%%%%%%%%%%%%%%%%%%%%%%%%%%%%%%%%%%%%%%%%%%%%%%%%%%%%%%%%%%%%%%%
%%%%%%%%%%%%%%%%%%%%%%%%%%%%%%%%%%%%%%%%%%%%%%%%%%%%%%%%%%%%%%%%%%%%
%%%%%%%%%%%%%%%%%%%%%%%%%%%%%%%%%%%%%%%%%%%%%%%%%%%%%%%%%%%%%%%%%%%%
%%%%%%%%%%%%%%%%%%%%%%%%%%%%%%%%%%%%%%%%%%%%%%%%%%%%%%%%%%%%%%%%%%%%
%%%%%%%%%%%%%%%%%%%%%%%%%%%%%%%%%%%%%%%%%%%%%%%%%%%%%%%%%%%%%%%%%%%%
%%%%%%%%%%%%%%%%%%%%%%%%%%%%%%%%%%%%%%%%%%%%%%%%%%%%%%%%%%%%%%%%%%%%
%%%%%%%%%%%%%%%%%%%%%%%%%%%%%%%%%%%%%%%%%%%%%%%%%%%%%%%%%%%%%%%%%%%%

\section{Introduction}\label{sec:1:intro}
Suppose we are interested in the association between an outcome variable $\y \in \mathbb{R}^n$ and a set of predictors $\x_j \in \mathbb{R}^n,\, j=1, \dots,d$. In order to assess this we might consider the model
\begin{equation}\label{linear1}
\y = \X\bbeta + \beps,
\end{equation}
where $\X = [\x_1,\dots,\x_d]\in \mathbb{R}^{n \times d}$, $\bbeta \in \mathbb{R}^d$ is vector of coefficients, and $\beps \in \mathbb{R}^n$ is a vector of errors with mean zero and constant variance. If the number of variables $d$ is much smaller than $n$, we could perform a formal statistical test for whether an element of $\bbeta$ is zero using classical methods, such as the score, likelihood ratio, or Wald test. However, in the high-dimensional setting, when the number of variables $d$ is large, these tests have low power, or are undefined.

In the case where $d$ is large, penalized regression techniques, such as the lasso \citep{tibs1996}, which takes the form
\begin{equation}\label{lasso1}
\hat \bbeta_\lambda = \argmin_{\b \in \mathbb{R}^d}\left\{\frac{1}{2n}\|\y-\X\b\|_2^2 + \lambda \|\b\|_1 \right\},
\end{equation}
can be used to provide a sparse estimate of $\bbeta$. Under suitable conditions, explored by \cite{ZY2006}, the sparsity pattern of the lasso coefficient vector $\hat \bbeta_\lambda$ correctly identifies which elements of $\bbeta$ are zero. However, the lasso and related procedures do not provide $p$-values or confidence intervals, and thus devising formal hypothesis tests for the parameter $\bbeta$ remains an open problem.

In recent years there has been some work on formal inference in the high-dimensional setting: the bootstrap has been used to estimate the sampling distribution of lasso coefficients \citep{tibs1996, bach2008bolasso, chatterjee2011bootstrapping},   \citet{meinshausen2010stability} proposed \emph{stability selection},  a sub-sampling technique which can control familywise error rates, and \citet{fan2001variable} provided sandwich formulas for penalized coefficients. However, bootstrapping and sub-sampling are computationally expensive and their finite sample properties may not be desirable, especially for methods that involve thresholding, like the lasso. Available variance formulas also suffer shortcomings: they typically neglect the fact that the tuning parameter tends to be selected based on the data, and are often only available for the non-zero coefficients.

Recently, \citet{lockhart2013significance} proposed the \emph{covariance test}, which produces a sequence of $p$-values as $\lambda$ decreases and features become non-zero in the lasso regression \eqref{lasso1}. Suppose the $k^{th}$ feature to become non-zero does so when the tuning parameter in \eqref{lasso1} is $\lambda_k - \eta$, for some arbitrarily small constant $\eta > 0$. Here, $\lambda_k$ is the $k^{th}$ knot in the lasso solution path. The covariance test statistic produces a $p$-value for each $\lambda_k$, which tests the null hypothesis that $\mathrm{supp}(\hat \bbeta_{\lambda_k}) \supseteq \mathrm{supp}(\bbeta)$. Thus, the covariance test can be used to test whether all relevant variables are non-zero in the lasso coefficient vector. However, this does not give confidence intervals or $p$-values for any individual variable's coefficient. 

{Even more recently, \citet{taylor2014post} and \citet{lee2013exact} extended the covariance testing framework to test hypotheses about individual features, after conditioning on a model selected by the lasso. However, their framework permits inference only about features which have non-zero coefficients in a lasso regression; this set of features will vary across samples, making interpretation difficult. In contrast, our framework can be applied to all features in a data set, and can be used to understand why coefficients in a lasso regression are non-zero in the first place.}

Alternatively, \citet{zhang2011confidence}, \citet{van2013asymptotically} and \citet{javanmard2013confidence} proposed the \emph{low-dimensional projection estimator} (LDPE) for inference in high dimensions based on inverting the stationary conditions for lasso regression. To do this, LDPE uses lasso regression among the covariates to estimate the inverse of $\X^T\X$. Under suitable assumptions, LDPE is asymptotically optimal, in that the variance of the estimator achieves the Gauss-Markov lower bound. However, unlike \citet{lockhart2013significance}, who give $p$-values associated with the knots in the lasso solution path, this method uses the lasso as a starting point for a different estimator. Decisions made using their confidence intervals need not correspond to variable selection using the lasso. 

We applied the covariance test and LDPE to a diabetes data set, previously studied by \citet{efron2004least}, and which we analyze in greater detail in Section~\ref{sec:diabetes}. The data consist of a measure of diabetes disease progression for 442 patients, along with 10 variables. Table~\ref{tab:diab} lists the variables introduced or removed at each knot $\lambda_k$ in the lasso solution path, along with their associated $p$-values from the covariance test and LDPE. For the sake of comparison, we also include the $p$-values produced by multiple linear regression with all variables, the $p$-values produced by simple linear regression of the outcome on each feature separately, and the $p$-values from our proposed lasso-penalized score test, described in Section~\ref{sec:method}. LDPE requires specification of a tuning parameter, which we chose using 10-fold cross-validation.

\begin{table}[htbp] 
\spacingset{1}
\footnotesize
\centering
\begin{tabular}{l|l|lll|ll}
Knot & Predictor & covTest  & LDPE & Pen. score test & Multiple lin. reg. & Simple lin. reg.   \\ \hline \hline
1&  BMI &  $3.7 \times 10^{-9}$ & $2.3 \times 10^{-15}$ & $5.1\times 10^{-22}$ &$4.3\times 10^{-14}$ & $3.4 \times 10^{-42}$\\ 
  2& LTG &  $1.9\times 10^{-22}$ & $2.9 \times 10^{-10}$& $2.6\times 10^{-18}$ & $1.6 \times 10^{-5}$ & $8.8 \times 10^{-39}$ \\ 
   3& MAP &   0.005 & $1.4 \times 10^{-6}$ & $2.6 \times 10^{-8}$&$9 \times 10^{-8}$  & $1.6 \times 10^{-22}$  \\ 
   4& HDL &   0.003 & 0.93 &$3.6 \times 10^{-15}$ & 0.63 & $6.1 \times 10^{-18}$\\ 
   5& Sex & 0.008 &$1.7 \times 10^{-4} $&0.002   & 0.36& 0.0012\\ 
   6&Glu &  0.86 &0.31 & 0.057 &0.079 & $7.6 \times 10^{-17}$ \\ 
   7&TC&  0.04 &0.07 &0.14 & 0.058 &$ 6.9 \times 10^{-6}$\\ 
   8&TCH& 0.54 &0.41 & 0.17 & 0.27& $2.3 \times 10^{-21}$ \\ 
  9&LDL & 0.87 &0.30 &0.21 & 0.16& $2.3 \times 10^{-4}$ \\ 
  10&Age & 0.98 & 0.87 &0.87 & 0.87 & $7.1 \times 10^{-5}$\\ 
 11  & -HDL& -&- &   - &-&\\ 
12&HDL  & 0.80 & 0.93&$3.6 \times 10^{-15}$ & 0.63 & $6.1 \times 10^{-18}$ \\
\end{tabular}
\caption{The diabetes data set. Variables are ordered according to when their coefficients become non-zero in the lasso solution \eqref{lasso1}, as $\lambda \rightarrow 0$. `covTest' refers to the method of \citet{lockhart2013significance} and $p$-values for this method were produced by the covTest R package. `LDPE' refers to the method of \citet{zhang2011confidence} and \citet{van2013asymptotically}, for which code was provided by the authors. Mutiple and simple linear regression refer to those  $p$-values  produced by the Wald test. `Pen. score test' refers to the lasso-penalized score test, described in Section~\ref{sec:method}, with $\lambda=4$. At the $11^{th}$ knot, HDL leaves the lasso solution, and no $p$-value is available for covTest. Of the $p$-values presented in this table, only those from covTest are ordered. For ease of display, here we present $p$-values for all methods in the ordering given by the covTest $p$-values.}\label{tab:diab}
\end{table}

Using the covariance test to select $\lambda$, we might decrease $\lambda$ and stop when some $p$-value greater than $0.05$ is observed. Using this rule, we would stop upon reaching a $p$-value of 0.86 at knot 6, and report a model with 5 covariates. However, knot 7 produces a $p$-value of 0.04, which suggests that all variables were not in the model after knot 5. Should we use the model with 7 or 5 variables? In the model with 7 variables, how should we interpret the presence of glucose, which produced a $p$-value of $0.86$?  Further, using any reasonable stopping rule, we would include HDL in our model. However, HDL yields a $p$-value of 0.63 in multiple linear regression, suggesting there is little evidence for its association after accounting for trends in all other features. How should we interpret this discrepancy? The answers to these questions are not clear.

Using LDPE, the results are broadly similar to those from multiple linear regression. Here, the sample size is sufficient so that the inverse of $\X^T\X$ can be accurately estimated. However, in higher dimensions, this may not be the case. 

In this paper, we propose the \emph{penalized score test}, which can be interpreted as a compromise between multiple linear regression on all features, and simple linear regression on each feature separately. We show that the sparsity pattern of the lasso results from a decision based on this test. Unlike the covariance test statistic, it gives $p$-values for the association of each individual feature with the outcome, and unlike LDPE, the resulting $p$-values are directly related to variable selection using the lasso.

The rest of the paper is organized as follows. In Section~\ref{sec:method} we describe our proposed method, the penalized score test, for general penalties. In Section~\ref{sec:lasso} we consider the special case of the lasso-penalized score test, and explore its asymptotic distribution, its relationship to consistent variable selection (Section~\ref{sec:bias}), and the behavior of the test on simulated and real data (Sections~\ref{sec:sim}, \ref{sec:diabetes} and \ref{sec:thresh}). In Section~\ref{sec:ridge} we consider the special case of the ridge-penalized score test. In Section~\ref{sec:noncon} we propose extensions to other sparsity-inducing penalties. We end with a discussion in Section~\ref{choosinglambda}.

\section{The penalized score test}\label{sec:method}
Throughout the paper, we assume that  $\y^T{\bf 1}_n = 0$, $\x_j^T\x_j = n$ and $\x_j^T{\bf 1}_n=0$ for $j=1,\dots,d$. Vectors are denoted in lowercase bold font, while matrices are in uppercase bold font. In order to simplify the notation, we will consider a single variable of interest $\x = \x_j$, and denote $\Z = [\,\x_k, k \neq j\,] \in \mathbb{R}^{n \times (d-1)}$ as the matrix containing all other features.  Note that any procedure applied to $\x_j$ can be applied to all other variables in turn. With this notation, we re-write the model \eqref{linear1} as
\begin{equation}\label{mymodel}
\y = \alpha \x + \Z\bbeta+ \beps,
\end{equation}
where $\alpha \in \mathbb{R}$ and $\bbeta \in \mathbb{R}^{d-1}$.

Our goal is to test $H_0: \alpha =0$.  One way to do this is using the score test, based on the derivative of the log-likelihood of the model, also known as the score, evaluated under the null hypothesis. Using the model \eqref{mymodel}, and assuming normality of the errors, the (scaled) score statistic is
\begin{equation}\label{classic}
T = \x^T(\y - \y^0)/\sqrt{n},
\end{equation}
where $\y^0 = \Z\bbeta$. We reject $H_0$ when $|T|$ is large, with respect to an appropriate reference distribution. 

Typically, in applying the score test, the parameters are estimated under the constraints imposed by the null hypothesis. In the setting of \eqref{mymodel}, this corresponds to estimating $\bbeta$ using multiple linear regression of $\y$ on $\Z$. However, when $d$ is {an appreciable fraction of $n$, multiple linear regression of $\y$ on $\Z$ yields highly variable coefficient estimates, resulting in low power, and when $d > n$ multiple linear regression of $\y$ on $\Z$} is undefined. As an alternative, we could use a small subset of the other features in estimating $\bbeta$, with e.g. step-wise regression. However, selecting an appropriate model is challenging, and inference after model selection is notoriously difficult \citep{berk2012valid,leeb2005model}.

Instead, we propose to estimate $\bbeta$ in \eqref{mymodel} using penalized regression. The proposed approach is as follows. We first calculate $\hat \b^0_\lambda$, which serves as an estimate of $\bbeta$, using the penalized regression
\begin{equation}\label{eq:restricted}
\hat{\b}^0_\lambda = \argmin_{\b}\left\{ \|\y - \Z\b\|_2^2/(2n) + \lambda J(\b) \right\},
\end{equation}
where $J(\b)$ is a penalty function, such as the lasso ($J(\b)=\|\b\|_1$), ridge ($J(\b)=\|\b\|_2^2/2$), or subset selection ($J(\b) = \|\bbeta\|_0$), and $\lambda$ is a non-negative tuning parameter.  We then set $\hat {\bf y}^0_{\lambda} = \Z \hat{\b}^0_\lambda$, and form the test statistic
\begin{equation}\label{eq:teststat}
T_{\lambda} = \x^T(\y-\hat {\bf y}^0_{\lambda})/\sqrt{n},
\end{equation}
a measure of association between $\x$ and $\y -\hat {\bf y}^0_\lambda$. We declare $T_\lambda$ to be statistically significant, for a null hypothesis to be discussed in Section~\ref{whatnull}, when $|T_\lambda|$ is large, based on an appropriate reference distribution. Since $T_{\lambda}$ looks superficially like the score statistic from linear regression \eqref{classic}, we refer to this procedure as the \emph{penalized score test}. 

Interestingly, if we choose $J(\b) = \|\b\|_1$, and declare $T_\lambda$ to be significant when  $|T_{\lambda}| > \sqrt{n}\lambda$, then $T_\lambda$ is significant precisely when $\x$'s coefficient is non-zero in a lasso-penalized regression of $\y$ on $\x$ and $\Z$ together. That is, the sparsity pattern of the lasso solution for a given $\lambda$  is the result of a decision based on the proposed test.  We make this assertion precise in Proposition~\ref{prop:score}, the proof of which follows immediately from the Karush-Kuhn-Tucker conditions for lasso regression. 

\begin{proposition}
\label{prop:score}
Let $(\hat a_\lambda,\hat \b_\lambda)$ be the lasso solution
\begin{equation}\label{lasso2}
(\hat a_\lambda,\hat \b_\lambda) = \argmin_{(a,\b) \in \mathbb{R}^d} \left \{ \|\y - a\x- \Z\b\|_2^2/(2n) + \lambda \|(a,\b)\|_1\right\},
\end{equation}
and let $T_\lambda$ be as in \eqref{eq:restricted} and \eqref{eq:teststat} with $J(\b) = \|\b\|_1$. Then, $\hat a_{\lambda} \neq 0$ if and only if $|T_{\lambda}| >  \sqrt{n}\lambda$.
\end{proposition}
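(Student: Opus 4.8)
The plan is to read off both implications from the Karush--Kuhn--Tucker (subgradient) stationarity conditions of the two convex programs \eqref{lasso2} and \eqref{eq:restricted}, exploiting the fact that fixing $a=0$ in \eqref{lasso2} reduces it exactly to \eqref{eq:restricted}. Writing the joint objective as $\frac{1}{2n}\|\y - a\x - \Z\b\|_2^2 + \lambda(|a| + \|\b\|_1)$, a point $(\hat a_\lambda, \hat{\b}_\lambda)$ is optimal if and only if $\frac{1}{n}\x^T(\y - \hat a_\lambda \x - \Z\hat{\b}_\lambda) \in \lambda\,\partial|\hat a_\lambda|$ and, for each $k$, $\frac{1}{n} \z_k^T(\y - \hat a_\lambda\x - \Z\hat{\b}_\lambda) \in \lambda\,\partial|\hat b_{\lambda,k}|$, where $\partial|\cdot|$ denotes the subdifferential of the absolute value (equal to $\mathrm{sign}$ off zero and to $[-1,1]$ at zero).

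First I would observe that the candidate point $(0, \hat{\b}^0_\lambda)$, with $\hat{\b}^0_\lambda$ the restricted solution \eqref{eq:restricted}, automatically satisfies every $\b$-block stationarity condition: with $a=0$ the residual $\y - 0\cdot\x - \Z\hat{\b}^0_\lambda$ equals $\y - \Z\hat{\b}^0_\lambda$, so these are exactly the optimality conditions already guaranteed by \eqref{eq:restricted}. The only remaining requirement is the $a$-block at $a=0$, namely $\frac{1}{n}\x^T(\y - \Z\hat{\b}^0_\lambda) \in \lambda[-1,1]$, i.e. $\frac{1}{n}|\x^T(\y - \hat {\bf y}^0_{\lambda})| \le \lambda$. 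Recognizing the left-hand side as $|T_\lambda|/\sqrt{n}$ by the definition \eqref{eq:teststat}, this condition is precisely $|T_\lambda| \le \sqrt{n}\lambda$. Hence $(0,\hat{\b}^0_\lambda)$ solves \eqref{lasso2} if and only if $|T_\lambda| \le \sqrt{n}\lambda$.

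From here both directions follow. If $\hat a_\lambda = 0$, then $\hat{\b}_\lambda$ minimizes the restricted program, so (fitted values being unique) $\Z\hat{\b}_\lambda = \hat {\bf y}^0_{\lambda}$ and the $a$-block condition yields $|T_\lambda| \le \sqrt{n}\lambda$; contrapositively, $|T_\lambda| > \sqrt{n}\lambda$ forces $\hat a_\lambda \neq 0$. Conversely, when $|T_\lambda| < \sqrt{n}\lambda$ the $a$-block subgradient at the optimum $(0,\hat{\b}^0_\lambda)$ lies strictly inside $(-1,1)$; since the lasso residual is common to all optima, the standard fact that any coordinate with a strict subgradient must vanish in every solution gives $\hat a_\lambda = 0$. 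I expect the main obstacle to be exactly this last point: the joint lasso need not have a unique solution when $d > n$, so care is needed to pass from ``there exists an optimum with $a=0$'' to ``$\hat a_\lambda = 0$.'' The equicorrelation/uniqueness-of-residual argument settles the strict inequality cleanly, while the boundary case $|T_\lambda| = \sqrt{n}\lambda$, where $a$ may be zero or nonzero across solutions, is the only genuinely delicate point and is most naturally dispatched by assuming the optimum is unique (as holds generically), which aligns the non-strict inequality in the equivalence with the strict inequality in the statement.
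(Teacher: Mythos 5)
Your argument is correct and is exactly the route the paper takes: the paper states that Proposition~\ref{prop:score} ``follows immediately from the Karush--Kuhn--Tucker conditions for lasso regression'' and gives no further detail, and your write-up is simply a careful expansion of that one-line proof, checking that $(0,\hat\b^0_\lambda)$ satisfies the $\b$-block conditions automatically and that the $a$-block condition is precisely $|T_\lambda|\le\sqrt{n}\lambda$. Your additional care about non-uniqueness of the joint solution (invoking uniqueness of the fitted values, as in the cited work of Tibshirani on the lasso and uniqueness, to handle the strict-inequality case, and flagging the boundary case $|T_\lambda|=\sqrt{n}\lambda$) goes beyond what the paper records but is consistent with it.
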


Further, if we choose $J(\b) = \|\b\|_2^2/2$, corresponding to ridge regression, then $T_{\lambda}$ is precisely the score statistic from a mixed effects model, where the effects of $\Z$ are assumed to be random. We make this assertion precise in Proposition~\ref{prop:mixed}. This connection is further explored in Section~\ref{sec:ridge}.

\begin{proposition}
\label{prop:mixed}
Suppose that \begin{align} \label{eq:mixed}
\y \mid \bbeta &= \alpha \x +\Z\bbeta + \beps \nonumber \\
\bbeta &\sim  N_{d-1}\left(0,\sigma^2_\epsilon(n\lambda)^{-1}{\bf I}_{d-1}\right) \\
\beps &\sim N_n(0, \sigma^2_\epsilon {\bf I}_n), \nonumber
\end{align}
and let $l(\alpha)$ be the log-likelihood of $\y$, with score $\dot l(\alpha) =\tfrac{\partial}{\partial \alpha} l(\alpha)$. Let $T_\lambda$ be as in \eqref{eq:restricted} and \eqref{eq:teststat}  with $J(\b) = \|\b\|^2_2/2$. Then $T_\lambda/\sigma^2_\epsilon = \dot l(0)/\sqrt{n}$.
\end{proposition}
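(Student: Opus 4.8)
The plan is to marginalize out the random effect $\bbeta$, reducing the mixed model \eqref{eq:mixed} to an ordinary Gaussian linear model in which $\alpha$ enters only through the mean, and then to differentiate the resulting marginal log-likelihood in $\alpha$. Since $\beps$ and $\bbeta$ are independent Gaussians, the marginal law of $\y$ is $N_n(\alpha\x, \bSigma)$ with $\bSigma = \var(\Z\bbeta + \beps) = \Z\,\var(\bbeta)\,\Z^T + \sigma^2_\epsilon{\bf I}_n = \sigma^2_\epsilon\big[(n\lambda)^{-1}\Z\Z^T + {\bf I}_n\big]$. The key observation is that $\bSigma$ does not depend on $\alpha$, so $l(\alpha)$ is quadratic in $\alpha$ and its score is $\dot l(\alpha) = \x^T\bSigma^{-1}(\y - \alpha\x)$; evaluating at the null gives $\dot l(0) = \x^T\bSigma^{-1}\y = \sigma_\epsilon^{-2}\,\x^T\big[(n\lambda)^{-1}\Z\Z^T + {\bf I}_n\big]^{-1}\y$.

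Next I would write $T_\lambda$ explicitly for the ridge penalty. The stationary condition for \eqref{eq:restricted} with $J(\b) = \|\b\|_2^2/2$ yields the ridge estimator $\hat\b^0_\lambda = (\Z^T\Z + n\lambda{\bf I}_{d-1})^{-1}\Z^T\y$, so that $\hat{\bf y}^0_\lambda = \Z(\Z^T\Z + n\lambda{\bf I}_{d-1})^{-1}\Z^T\y$ and, from \eqref{eq:teststat},
$$T_\lambda = \frac{1}{\sqrt{n}}\,\x^T\big[{\bf I}_n - \Z(\Z^T\Z + n\lambda{\bf I}_{d-1})^{-1}\Z^T\big]\y.$$
Comparing this with the expression for $\dot l(0)$ above, the claimed equality $T_\lambda/\sigma^2_\epsilon = \dot l(0)/\sqrt{n}$ reduces, after cancelling the common factor $\sigma_\epsilon^{-2}n^{-1/2}$, to the single matrix identity
$${\bf I}_n - \Z(\Z^T\Z + n\lambda{\bf I}_{d-1})^{-1}\Z^T = \big[(n\lambda)^{-1}\Z\Z^T + {\bf I}_n\big]^{-1}.$$

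I expect the verification of this identity to be the crux of the argument. It follows from the Woodbury (Sherman--Morrison--Woodbury) identity applied with $A = {\bf I}_n$, $U = \Z$, $C = (n\lambda)^{-1}{\bf I}_{d-1}$, and $V = \Z^T$, which rewrites $\big({\bf I}_n + (n\lambda)^{-1}\Z\Z^T\big)^{-1}$ as ${\bf I}_n - \Z\big(n\lambda{\bf I}_{d-1} + \Z^T\Z\big)^{-1}\Z^T$. The only mild subtlety is bookkeeping the factors of $n\lambda$ so that $C^{-1} = n\lambda{\bf I}_{d-1}$ lands correctly inside the inner inverse. Once this is checked, substituting the identity into the two expressions and cancelling completes the proof; the remaining ingredients, namely the Gaussian marginalization and the differentiation of the quadratic form, are routine.
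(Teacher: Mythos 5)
Your proposal is correct and follows essentially the same route as the paper: marginalize out $\bbeta$ to get $\y \sim N_n\left(\alpha\x,\, \sigma^2_\epsilon\left[(n\lambda)^{-1}\Z\Z^T + {\bf I}_n\right]\right)$, differentiate the Gaussian log-likelihood, and identify $\left[(n\lambda)^{-1}\Z\Z^T + {\bf I}_n\right]^{-1}$ with ${\bf I}_n - \Z(\Z^T\Z + n\lambda{\bf I}_{d-1})^{-1}\Z^T$. The only difference is that you make the Woodbury step explicit where the paper simply asserts the identity, which is a welcome bit of added detail rather than a divergence in approach.
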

\subsection{What hypothesis is being tested?}\label{whatnull}
When using penalized regression to estimate $\bbeta$, some systematic bias is incurred: our estimate $\hat \b_\lambda^0$ is shrunken towards zero, relative to the unbiased multiple linear regression estimate. In this section, we describe how this bias affects inference, for a given tuning parameter $\lambda$. We will see that  the hypothesis being tested using the penalized score test \eqref{eq:teststat} depends on the tuning parameter $\lambda$.

{For the moment, consider the case where $\sqrt{n}\lambda \rightarrow 0$ as $n \rightarrow \infty$ and the dimension $d$ is fixed. As shown by \citet{KF2000}, for bridge penalties ($J(\b) =\sum_j |b_j|^\gamma$ where $\gamma > 0$) we have that $\hat \b_\lambda^0 \rightarrow_p  \hat \b_0^0 \equiv (\Z^T\Z)^{-1}\Z^T\y $, where we recognize  $\hat \b_0^0$ as the multiple linear regression estimate of $\bbeta$ under $H_0: \alpha = 0$. Thus, in this asymptotic setting,  $T_\lambda$ has the same limiting distribution as the classical score statistic, and one can interpret $T_\lambda$ as a test of $H_0: \alpha = 0$ vs. $H_1: \alpha \neq 0$. However, this asymptotic treatment neglects the fact that in any finite sample, $T_\lambda$ depends on the tuning parameter $\lambda$. For this reason, throughout this paper we predominantly consider an asymptotic scenario with $\lambda$ fixed, a scenario also considered by \citet{yu2002penalized} in the context of penalized spline estimation.
}

In the fixed-$\lambda$ regime, we define the population-level parameters
\begin{equation}\label{eq:params}
( a_\lambda,\b_\lambda) = \argmin_{a \in \mathbb{R},\b \in \mathbb{R}^{d-1}}\left\{ 
\frac{1}{2n}\E \|\y - a\x - \Z\b\|_2^2 + \lambda J(\b)
 \right\}.
\end{equation} 
{The stationary conditions of \eqref{eq:params} imply that $a_\lambda = \E[\x^T(\y- \Z\b_\lambda)/n]$, which is a measure of linear association between $\x$ and $\y-\Z\b_\lambda$. That is, $a_\lambda$ is a measure of correlation between the feature of interest $\x$, and the outcome $\y$ with the penalized effects of $\Z$ removed.} Note that when $a_\lambda = 0$, we have that $\E[\, \x^T(\y-\Z\b_\lambda)/\sqrt{n}\,] = 0$, where we recognize $\x^T(\y-\Z\b_\lambda)/\sqrt{n}$ as the penalized score statistic with $\hat \b_\lambda^0$ replaced by $\b_\lambda$. Provided $\hat \b^0_\lambda$ converges quickly enough to $\b_\lambda$, then $T_\lambda$ is centered around zero, asymptotically, when $a_\lambda = 0$.
Thus, the statistic $T_\lambda$ tests
$$H_{0,\lambda} : a_\lambda = 0 \quad \text{vs.} \quad H_{1,\lambda} : a_\lambda \neq 0.$$
We can write the parameter $a_\lambda$ more simply as 
\begin{equation}\label{eq:parama}
a_\lambda = \alpha + \bsigma_{xz}^T(\bbeta-\b_\lambda),
\end{equation}
where $\bsigma_{xz} =\Z^T\x/n$. Recall that in multiple linear regression of $\y$ on $(\x,\Z)$, the parameter associated with $\x$ is $\alpha$, while in simple linear regression of $\y$ on $\x$ alone, the coefficient associated with $\x$ is $\alpha + \bsigma_{xz}^T\bbeta$. 
Now, when $\lambda=0$, then $a_\lambda = \alpha$. On the other hand, when $\lambda$ is large, $\b_\lambda$ tends to zero and $a_\lambda$ tends to $\alpha + \bsigma^T_{xz}\bbeta$, provided $J(\b) >J(0)$ when $\b \neq {\bf 0}$. For moderate values of $\lambda$, $a_\lambda$ is thus a compromise between the multiple and simple linear regression parameters associated with $\x$. 

{%We can interpret the penalized score test as asking the question ``Is $\x$ correlated with $\y$, after removing the \emph{penalized} effects of $\Z$?", whereas multiple linear regression asks the question ``Is $\x$ correlated with $\y$, after removing the effects of $\Z$?" 
To make this interpretation of the parameter $a_\lambda$ more concrete, consider the case of $J(\b) = \|\b\|_0$, which corresponds to subset selection. This setting was studied by \citet{berk2012valid}, who discuss how the parameter associated with a feature $\x$ depends on which subset of the features $\bf Z$ are included in a regression model. Our framework, in which the parameter associated with $\x$ depends on the extent to which the effects of $\Z$ are penalized, generalizes this concept. }

In this fixed-$\lambda$ regime, the distribution of $T_\lambda$ depends on the choice of the penalty $J(\b)$ and the value of $\lambda$ in \eqref{eq:restricted}. In Section~\ref{sec:lasso} we give asymptotic theory for the distribution of $T_\lambda$ for the special case of lasso regression, $J(\b) = \|\b\|_1$. The distribution of the ridge-penalized score statistic, $J(\b) = \|\b\|_2^2/2$, comes from mixed-model theory, and is given in Section~\ref{sec:ridge}. We briefly discuss other penalty choices in Section~\ref{sec:noncon}.

\section{The lasso-penalized score test}\label{sec:lasso}
In this section, we examine in greater detail the penalized score test when the lasso is chosen as the penalty. That is, we first obtain the penalized coefficient vector
\begin{equation}\label{eq:lassocoef}
\hat \b^0_\lambda = \argmin_{\b \in \mathbb{R}^{d-1}}\left\{\|\y - \Z\b\|_2^2/(2n) + \lambda \|\b\|_1\right\},
\end{equation}
and then form the test statistic $T_\lambda = (\y-\Z\hat \b_\lambda^0)/\sqrt{n}$. Here we state Proposition~\ref{thm:lassodist}, proven in the Appendix, which gives the asymptotic distribution of $T_\lambda$.

First, we require some notation. Let $\A = \mathrm{supp}(\b_\lambda)$, where $|\A| = q$, and assume, without loss of generality, that $\b_\lambda$ is ordered so that $\b_\lambda = (b_{\lambda,1},\dots, b_{\lambda,q},0,\dots,0)^T$, and partition $\Z$ as $\Z = [\Z_{\A},\Z_{\A^c}]$. Denote ${\bf P}_\A =  \Z_\A( \Z^T_\A \Z_\A)^{-1} \Z^T_\A$ as the projection onto the columns of $\Z_\A$, and let $\bSigma_\A = \Z_\A^T\Z_\A/n$.

Note that the stationary conditions of \eqref{eq:params} with $J(\b) =\|\b\|_1$ require that
\begin{equation}\label{eq:stationarybl}
 \lambda \btau=\E[\Z^T(\y -a_\lambda\x- \Z\b_\lambda)/n],
 \end{equation}
for some $\btau$ satisfying $\btau_\A = \mathrm{sign}(\b_{\lambda\A})$ and $\|\btau_{\A^c}\|_\infty \leq 1.$ 

We will require the following conditions. Note that some of these conditions depend on $\lambda$, and thus may hold for some values of $\lambda$, and not for others.
\begin{itemize}
\item[({\bf A1})] $\y =\alpha\x +  \Z\bbeta + \beps$, where $\x$ and $\Z$ are fixed, and $\beps = [\epsilon_1,\dots,\epsilon_n]^T$ are independent and identically distributed with mean zero and variance $\sigma^2_\epsilon$. 
\item[({\bf A2})] The covariates $[\x,\Z_\A]$ are such that $\lim_{n \rightarrow \infty}\| {\bf r} \|_\infty/ \|{\bf r}\|_2 \rightarrow 0$, where ${\bf r} =  ({\bf I}_n - {\bf P}_\A)\x \in \mathbb{R}^n.$
\end{itemize}
Condition ({\bf A2}) is needed in order to apply the Lindeberg-Feller Central Limit Theorem, and requires that no single element of $({\bf I}_n - {\bf P}_\A)\x$ is too large, relative to the other elements.

In order to allow $d$ to grow more quickly than $n$, we require the following additional conditions.
\begin{itemize}
\item[({\bf A3})] $\lambda$, $(\alpha,\bbeta)$ and $(\x,\Z)$ are such that $\|\btau_{\A^c}\|_\infty \leq 1 - \delta$ for some $\delta > 0$.
\item[({\bf A4})] The matrix $\Z$ is such that $\|\Z^T_{\A^c}\Z_\A/n\|_\infty = o(\sqrt{n/ (q \log q)})$.
\item[({\bf A5})] The errors $\beps$ have sub-Gaussian tails. That is, there exists some constants $c,h > 0$ such that $\Pr(|\epsilon_i| > x) < 2\exp(-hx^2), \,\forall x > c$. Furthermore, $\Z = [z_{ij}]$ has bounded entries, i.e. $|z_{ij}| < M,\, \forall i,j$.
\item[({\bf A6})] The minimum eigenvalue of $\bSigma_\A$ is bounded (i.e. $\Lambda_{\min}(\bSigma_\A) \geq \eta > 0$), and the sample size $n$, the dimension $d$, the number of non-zero parameters $q$, and the minimum non-zero coefficient $b_{\min} \equiv \min\{\,|b_{\lambda,1}|,\dots,|b_{\lambda,q}|\,\}$ are such that 
$$
\frac{\log(d)}{n} + \frac{q \log(q)}{n b^2_{\min}}  \rightarrow 0.
$$
\end{itemize}
Conditions ({\bf A3}) and ({\bf A4}) guarantee that the zero and non-zero elements of $\b_\lambda$ can be distinguished from one another. In particular, ({\bf A3}) requires that the inactive variables $\Z_{\A^c}$ cannot be too correlated with the residuals $\y-a_\lambda\x - \Z\b_\lambda$. By the stationary conditions of \eqref{eq:params}, we know that $\|\btau_{\A^c}\|_\infty \leq 1$; here ({\bf A3}) ensures enough separation in this inequality as the dimension grows. Similarly, ({\bf A4}) requires that correlation between the active and inactive features cannot grow too quickly. This is related to the  irrepresentable condition, given in \citet{ZY2006}, which can be written as $\|\Z_{\A^{*c}}^T\Z_{\A^*}(\Z_{\A^*}^T\Z_{\A^*})^{-1}\mathrm{sign}(\bbeta_{\A^*})\|_\infty< 1$, where $\A^* = \mathrm{supp}(\bbeta)$.

Conditions ({\bf A5}) and ({\bf A6}) are somewhat standard in high-dimensional statistics. The sub-Gaussian tails of $\beps$ and boundedness of $\Z$  allow $d$ to grow quickly, so long as $\log(d)/n\rightarrow 0$. We assume sub-Gaussian tails in ({\bf A5}) for convenience; we could assume e.g. polynomial tails, at the cost of a slower rate of convergence. The condition on $ q \log(q) /(n b^2_{\min} )$ ensures that the non-zero elements of $\b_\lambda$ are large enough to be detected in the estimate $\hat\b_\lambda^0$. The condition on $\Lambda_{\min}({\bSigma}_\A)$ ensures that $\b_{\lambda\A}$ is identifiable. 

\begin{proposition}\label{thm:lassodist}
Let $\hat \b_\lambda^0$ be as in \eqref{eq:lassocoef} and define $T_\lambda = \x^T(\y-\Z\hat \b_\lambda^0)/\sqrt{n}$. Assume conditions ({\bf A1})-({\bf A6}) hold. Then under $H_{0,\lambda}: a_\lambda=0$,
\begin{equation}\label{eq:lassodist}
\frac{T_\lambda}{\sigma_\epsilon\sqrt{\x^T( {\bf I}_n - {\bf P}_\A)\x/n}} \rightarrow_d N\left(0,  1\right).
\end{equation}
\end{proposition}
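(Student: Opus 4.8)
The plan is to show that, on a high-probability event on which the lasso \eqref{eq:lassocoef} recovers the support $\A$ of the population parameter $\b_\lambda$ together with the correct signs, the statistic $T_\lambda$ collapses exactly to the simple weighted sum $\rr^T\beps/\sqrt{n}$, to which I then apply a central limit theorem. The argument splits into three pieces: (i) establish sign-consistent support recovery, $\supp(\hat\b_\lambda^0)=\A$ and $\mathrm{sign}(\hat\b_{\lambda\A}^0)=\mathrm{sign}(\b_{\lambda\A})$, with probability tending to one; (ii) on that event, reduce $T_\lambda$ algebraically to $\rr^T\beps/\sqrt n$; and (iii) apply the Lindeberg--Feller CLT to the resulting sum of independent terms. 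Pieces (ii) and (iii) are deterministic/routine; piece (i) is where the high-dimensional conditions (A3)--(A6) do their work and is the main obstacle.

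For the reduction in (ii), note that on the recovery event $\hat\b_{\lambda\A^c}^0={\bf 0}$, so the Karush--Kuhn--Tucker conditions for \eqref{eq:lassocoef} restricted to $\A$ read $\Z_\A^T(\y-\Z_\A\hat\b_{\lambda\A}^0)/n=\lambda\btau_\A$, giving $\hat\b_{\lambda\A}^0=\bSigma_\A^{-1}(\Z_\A^T\y/n-\lambda\btau_\A)$. The population stationary condition \eqref{eq:stationarybl} restricted to $\A$ gives the analogous identity $\bSigma_\A\,\b_{\lambda\A}=\E[\Z_\A^T(\y-a_\lambda\x)/n]-\lambda\btau_\A$. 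Subtracting, and using that $\x,\Z$ are fixed so that $\bSigma_\A$ and $\btau_\A$ coincide in the two identities, the $\lambda\btau_\A$ terms cancel exactly and I obtain, under $H_{0,\lambda}:a_\lambda=0$, the clean expression $\hat\b_{\lambda\A}^0-\b_{\lambda\A}=\bSigma_\A^{-1}\Z_\A^T\beps/n$. Writing $T_\lambda=\x^T(\y-\Z_\A\b_{\lambda\A})/\sqrt n-\x^T\Z_\A(\hat\b_{\lambda\A}^0-\b_{\lambda\A})/\sqrt n$, the first term equals $\x^T\beps/\sqrt n$ under $H_{0,\lambda}$ (since $a_\lambda=0$ kills the deterministic part, via \eqref{eq:parama}), and the second simplifies using $\x^T\Z_\A\bSigma_\A^{-1}\Z_\A^T=n\,\x^T{\bf P}_\A$ to $\x^T{\bf P}_\A\beps/\sqrt n$. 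Hence $T_\lambda=\x^T({\bf I}_n-{\bf P}_\A)\beps/\sqrt n=\rr^T\beps/\sqrt n$ exactly on the recovery event.

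Step (iii) is then routine. The quantity $T_\lambda=\sum_{i=1}^n r_i\epsilon_i/\sqrt n$ is a sum of independent, mean-zero random variables with $\var(T_\lambda)=\sigma_\epsilon^2\|\rr\|_2^2/n=\sigma_\epsilon^2\,\x^T({\bf I}_n-{\bf P}_\A)\x/n$, which matches the claimed scaling. Setting $w_i=r_i/\|\rr\|_2$ so that $\sum_iw_i^2=1$, the Lindeberg condition for $\sum_iw_i\epsilon_i$ reduces (the $\epsilon_i$ being i.i.d.\ with finite variance) to $\max_i|w_i|=\|\rr\|_\infty/\|\rr\|_2\to0$, which is precisely condition (A2). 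The Lindeberg--Feller theorem then yields $\rr^T\beps/(\sigma_\epsilon\|\rr\|_2)\to_d N(0,1)$, i.e.\ \eqref{eq:lassodist}; combining with $\Pr(\text{recovery})\to1$ by a standard Slutsky argument gives the stated convergence unconditionally.

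The hard part is (i). I would establish it by a primal--dual witness construction: define the oracle solution $\tilde\b$ equal to $\bSigma_\A^{-1}(\Z_\A^T\y/n-\lambda\btau_\A)$ on $\A$ and zero on $\A^c$, and verify that it satisfies the KKT conditions of \eqref{eq:lassocoef} with high probability, using two concentration arguments. For sign consistency on $\A$, since $\tilde\b_\A-\b_{\lambda\A}=\bSigma_\A^{-1}\Z_\A^T\beps/n$ under $H_{0,\lambda}$, I would bound $\|\tilde\b_\A-\b_{\lambda\A}\|_\infty$ via sub-Gaussianity of $\beps$ (A5), the eigenvalue bound $\Lambda_{\min}(\bSigma_\A)\ge\eta$ (A6), and a union bound over $q$ coordinates, and show this deviation is dominated by $b_{\min}$ through the signal condition $q\log q/(nb_{\min}^2)\to0$ in (A6). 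For strict dual feasibility on $\A^c$, I must show $\|\Z_{\A^c}^T(\y-\Z_\A\tilde\b_\A)/n\|_\infty<\lambda$, where the systematic part is controlled by the strengthened gap $\|\btau_{\A^c}\|_\infty\le1-\delta$ in (A3), the active/inactive coupling by $\|\Z_{\A^c}^T\Z_\A/n\|_\infty=o(\sqrt{n/(q\log q)})$ in (A4), and the stochastic fluctuation by a maximal inequality over $d-q$ coordinates, which is $o(1)$ provided $\log d/n\to0$ in (A6). The delicate bookkeeping is arranging for these three error terms to be simultaneously dominated by the gap $\delta$ uniformly as $d,q\to\infty$; this is where conditions (A3)--(A6) must be combined carefully.
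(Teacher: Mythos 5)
Your proposal is correct and follows essentially the same route as the paper's proof: establish support (and sign) recovery of $\b_\lambda$ by a primal--dual witness argument driven by ({\bf A3})--({\bf A6}), reduce $T_\lambda$ to $\x^T({\bf I}_n-{\bf P}_\A)\beps/\sqrt{n}$ via the KKT conditions and the population stationary condition \eqref{eq:stationarybl}, and finish with the Lindeberg--Feller CLT using ({\bf A2}). The only cosmetic differences are that you obtain the identity $T_\lambda=\rr^T\beps/\sqrt{n}$ exactly on the recovery event (the paper carries $o_p(1)$ remainders through an asymptotically linear representation), and you bound the active-set deviation directly from the closed form rather than via the paper's convexity-on-a-sphere argument for the restricted minimizer.
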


Note that when $\lambda$ is chosen large enough so that $\b_\lambda = {\bf 0}$, then $\A = \emptyset$, and the variance of $T_\lambda$ is approximately $\sigma^2_\epsilon$, as in simple linear regression. On the other hand, if $\lambda=0$, the variance of $T_\lambda$ is approximately $\sigma^2_\epsilon \x^T\left({\bf I}_n-\Z(\Z^T\Z)^{-1}\Z^T\right)\x/n$, the variance of the classical score statistic used to test $\alpha=0$ in the model \eqref{mymodel}.

Unfortunately, the variance formula given in \eqref{eq:lassodist} depends on the support set $\A=\supp(\b_\lambda)$ and the residual variance $\sigma^2_\epsilon$, which are in general unknown. Estimating the residual variance $\sigma^2_\epsilon$ is required in a number of other procedures, such as the covariance test \citep{lockhart2013significance}, and there are a few options available \citep[see e.g.][]{fan2012variance}.

In order to estimate $\A$, we propose two options:
\begin{enumerate}
\item Use the observed support $\hat \A = \supp(\hat \b_\lambda^0)$ as an estimate of $\A$:
\begin{equation}\label{asymptoticvar}
\widehat{\var}(T_\lambda) =\hat \sigma^2_\epsilon\x^T( {\bf I}_n - {\bf P}_{\hat \A})\x/n.
\end{equation} 
We call this the \emph{asymptotic variance estimate} since it relies on the property that $\hat \A = \A$ with high probability, asymptotically.
\item Replace $\x^T\left({\bf I}_n-\bP_{\A}\right)\x/n$  with the upper bound of 1:
\begin{equation}\label{conservativevar}
\widehat{\var}(T_\lambda) =\hat \sigma^2_\epsilon\x^T\x/n = \hat \sigma^2_\epsilon.
\end{equation}
We call this the \emph{conservative variance estimate}.
\end{enumerate} 

In Section~\ref{sec:sim} we show that the asymptotic variance generally works well in practice. Using the conservative variance estimate has an appealing interpretation in light of Proposition~\ref{prop:score}. The penalized score test \eqref{eq:teststat} tests the effect of a single feature $\x=\x_j$ in \eqref{mymodel}, adjusting for the other features $\Z = [\x_k, k \neq j]$ with lasso regression. When using the penalized score test for testing the effect of $\x = \x_j$ for each $j =1\dots,d$ in turn, the conservative variance estimate will be the same for each $j=1,\dots,d$. Thus, the sparsity pattern of lasso regression, which results from comparing each $T_\lambda$ to $\sqrt{n}\lambda$, is the same as the set of rejections that results from applying the penalized score test to each feature in turn, using the same (conservative) significance threshold.

\subsection{Bias and the irrepresentable condition}\label{sec:bias}
As we saw in Section~\ref{whatnull}, when $\lambda > 0$  the penalized score test does not test the null hypothesis $H_0: \alpha=0$, as in multiple linear regression, but instead adopts the null hypothesis $H_{0,\lambda}: a_\lambda =0$.  Thus, if the penalized score test is used as a surrogate for classical tests of $H_{0} : \alpha = 0$ versus $H_1: \alpha \neq 0$, it may not be an unbiased test. In this section we investigate the {relationship between the penalized score test and unbiased tests of $H_{0} : \alpha = 0$, and show that, in the case of the lasso penalty, differences between the tests are closely related to the irrepresentable condition established by \citet{ZY2006}.
}

As discussed in Section~\ref{whatnull}, our main interest in this paper is in the interpretation and behavior of the penalized score test at a particular tuning parameter $\lambda$. However, in this section, we consider the behavior of the lasso-penalized score test as $\lambda \rightarrow 0$ in order to establish a connection with variable selection consistency results. We note that when using the lasso-penalized score test in practice, variable selection consistency is not necessary in order to obtain a valid test of $H_{0,\lambda}$.

First, we show that the lasso-penalized score statistic \eqref{eq:teststat} can be interpreted as a shifted version of a classical score statistic. Let  $\hat \A = \mathrm{supp}(\hat \b^0_\lambda)$ in \eqref{eq:lassocoef}, and consider testing for the effect of $\x$, adjusted for $\Z_{\hat \A}$, using the classical score test \eqref{classic}. In this case, the classical score statistic \eqref{classic} takes the form
\begin{equation}\label{classicA}
T_{\hat \A}  = \x^T\left ({\bf I}_n -{\bf P}_{\hat \A}\right)\y/\sqrt{n}.
\end{equation}
Suppose $\hat \b_\lambda^0$ is ordered such that  $\hat \b_{\lambda}^0 = (\hat b_{\lambda,1}^{0},\dots,\hat b_{\lambda,|\hat \A|}^{0},0,\dots,0)^T = (\hat \b_{\lambda\hat \A}^0,{\bf 0}^T)^T$, where $\min\{|\hat b_{\lambda,1}^{0}|,\dots,|\hat b_{\lambda,|\hat\A|}^{0}|\} >0$. Let $\bsigma_{x\hat \A} =\Z_{\hat \A}^T\x/n$, $\bSigma_{\hat \A} = \Z_{\hat \A}^T\Z_{\hat \A}/n$, and $\hat  \btau_{\hat \A} = \mathrm{sign}(\hat \b_{\lambda\hat \A}^0)$. Using the identity $\Z\hat \b_\lambda^0 ={\bf P}_{\hat \A}\y - \lambda\Z_{\hat \A}\bSigma_{\hat \A}^{-1}\hat \btau_{\hat \A}$, given in Equation 21 of \citet{tibshirani2012degrees}, we get that
\begin{equation}\label{classicB}
T_\lambda= T_{\hat \A}  + \sqrt{n}\lambda \bsigma_{x\hat \A}^T\bSigma_{\hat \A}^{-1}\hat \btau_{\hat \A}.
\end{equation}
Thus, the penalized score statistic $T_\lambda$ differs from the classical score statistic $T_{\hat \A}$ by $\sqrt{n}\lambda \bsigma_{x\hat \A}^T\bSigma_{\hat \A}^{-1}\hat \btau_{\hat \A}$. 

Now, suppose that we wish to test $H_0: \alpha=0$. In Section~\ref{whatnull}, we saw that $T_\lambda$ is centered around zero when $a_\lambda=0$. Therefore, unless $\alpha = a_\lambda$, the penalized score test may not be unbiased as a test of $H_0: \alpha = 0$. On the other hand, $T_{\hat \A}$ is centered around zero when $\hat \A$ contains all variables relevant to the outcome. A sufficient condition for $T_{\hat \A}$ to be centered around zero is then $\hat \A = \A^*$, where $\A^* = \mathrm{supp}(\bbeta)$. In order to make use of \eqref{classicB} to compare the penalized score test to an unbiased test of $H_0$, we thus consider the case where $(\y,\Z,\lambda)$ are such that $\mathrm{Pr}(\hat \A = \A^*)\rightarrow 1$ and $\Pr(\hat \btau_{\hat \A} = \mathrm{sign}(\bbeta_{\A^*}))\rightarrow 1$ under $H_0: \alpha=0$. \citet{ZY2006} showed that this holds provided that $\lambda \rightarrow 0$ and $\sqrt{n}\lambda\rightarrow \infty$ as $n\rightarrow \infty$, in addition to some assumptions on $(\y,\Z)$, which we omit here for ease of exposition. Note that we have not yet made any assumptions on the relationship between $\x$ and $\Z$. Under these assumptions, by \eqref{classicB} we have that 
\begin{equation}\label{classicC}
T_\lambda = T_{\A^*}+\sqrt{n}\lambda  \bsigma_{x\A^*}^T\bSigma_{\A^*}^{-1}\btau_{\A^*}+o_p(1),
\end{equation}
where $\btau_{\A^*} = \mathrm{sign}(\bbeta_{\A^*})$. Here, we can consider $T_{\A^*}$ to be the `oracle' test statistic: the score statistic for testing $H_0: \alpha=0$, which knows in advance the support of $\bbeta$.

Now, in order for the lasso to recover the support of $(\alpha,\bbeta)$ in lasso regression of $\y$ on $(\x,\Z)$, as in \eqref{lasso2}, \citet{ZY2006} showed that, in addition to conditions on $(\y,\Z,\lambda)$ required for \eqref{classicC} to hold, an \emph{irrepresentable condition} must hold. This condition implies (among other things) that $| \bsigma_{x\A^*}^T\bSigma_{ \A^*}^{-1} \btau_{\A^*}| <1$. Examining Proposition~\ref{prop:score}, we can see the connection between the irrepresentable condition and recovery of the support of $(\alpha,\bbeta)$. In the lasso solution \eqref{lasso2}, $\hat a_\lambda =0$ when $|T_\lambda| \leq\sqrt{n}\lambda$. Under $H_0:\alpha=0$, we have that $T_{\A^*} = O_p(1)$, and thus by \eqref{classicC}, $\Pr_{H_0}(|T_\lambda| \leq \sqrt{n}\lambda) \rightarrow 1$ when $| \bsigma_{x\A^*}^T\bSigma_{ \A^*}^{-1} \btau_{\A^*}| <1$. That is, when the irrepresentable condition is satisfied, the penalized score statistic $T_\lambda$ is close enough to the oracle statistic $T_{\A^*}$ for the decision rule used by the lasso (i.e. `reject $H_{0,\lambda}$ when $|T_\lambda| > \sqrt{n}\lambda$') to correctly identify that $\alpha=0$. 

In the preceding discussion we assumed that $\lambda$ was chosen in order to have $\hat \A = \A^*$.  However, this is not necessary in order to for the penalized score test to yield meaningful results. How far the penalized score statistic $T_\lambda$ deviates from zero under $H_0: \alpha=0$ depends more on the bias of $\hat \b^0_\lambda$ relative to $\bbeta$, rather than the support set $\hat \A$ per se. Choosing a smaller $\lambda$ will result in less bias in $\hat \b^0_\lambda$ relative to $\bbeta$, at the expense of a larger number of degrees of freedom spent in the lasso regression \eqref{eq:restricted}. We explore this issue numerically in Section~\ref{sec:sim}, and discuss it further in Section~\ref{choosinglambda}.

\subsection{Simulation study}\label{sec:sim}
In this section, we study the empirical behavior of the lasso-penalized score test.  We show that the test serves as a useful proxy for tests of $H_0: \alpha = 0$ provided $\lambda$ is small enough, and that it behaves like tests of marginal correlation when $\lambda$ is large.

First, we generated a matrix of correlated features $\X \in \mathbb{R}^{n \times d}$, where the rows were independently distributed $N_d(0, {\bf S})$ with ${\bf S}_{jk} = 0.5^{|j-k|}$. We then generated an outcome $\y \sim N_n(\X\bbeta, {\bf I}_n)$, where we set 10 elements of $\bbeta$ at random to be 0.4, and the rest to be zero. We used sample sizes of $n=50,75,100,150,200,300$ and $400$, and dimensions of $d=100$ and $d=300$. Results are averaged over $B=500$ simulated data sets, where $\bbeta$ was held constant over replications with the same dimension $d$.

We performed the lasso penalized score test on each feature in turn, for a sequence of values of $\lambda$. For the sake of comparison, we also performed simple linear regression of $\y$ on each feature, multiple linear regression of $\y$ on all features (for the cases where $d >n$), LDPE, and the `oracle' score test, which tests for the association of feature $\x_j$, and knows the support of the other features $\{\beta_k : k \neq j \}$. In order to make results comparable, we used the same estimate of the residual variance $\sigma^2_\beps$  in the penalized score test, multiple linear regression, and in simple linear regression, which we obtained using the refitted cross-validation method described by \citet{fan2012variance}. Note that we do not compare to methods of \citet{lee2013exact} and  \citet{taylor2014post}, which describe inference regarding only those features with non-zero coefficients in lasso regression, or the covariance test of \citet{lockhart2013significance}, which does not provide inference for individual features.

We declared a test to be significant when the resulting $p$-value was less than $1/d$. Since 10 feature are truly associated with the outcome, we would expect $(d-10)/d \approx 1$ false positives per simulated data set for an unbiased test, which we will refer to as the `nominal error rate'. For each test we calculated the expected false positives (EFP) and the power. For a particular test, if $p_{jk}$ is the $p$-value for feature $j$ on the $k^{th}$ simulated data set, EFP and power are given by 
\begin{align*}
\text{EFP} &= \frac{1}{B}\sum_{k=1}^B \sum_{j: \beta_j =0} 1\{p_{jk} < 1/d \} \\
\text{power} &= \frac{1}{B}\frac{1}{\|\bbeta\|_0} \sum_{k=1}^B \sum_{j : \beta_j \neq 0} 1\{p_{jk} < 1/d \}, 
\end{align*}
where $1\{\cdot\}$ is the indicator function and $\|\bbeta\|_0=10$ is the cardinality of $\bbeta$, where here we use the notation of Equation~\ref{linear1}. We chose to use a significance threshold in order to control EFP, but in principle one could use any method of error control, such as a {\^S}id{\'a}k, Bonferroni, or FDR correction.

In Figure~\ref{fig:score:slr_vs_mlr} we examine the $p$-values produced by the lasso-penalized score test when $\lambda = 0.005$ and $\lambda = 0.6$, for a single simulated data set with $d=100$ and $n=200$, and compare them to the $p$-values produced by multiple and simple linear regression. When $\lambda = 0.005$, 88 of the 100 coefficients are non-zero in the lasso regression on all features. Consequently, we see that the penalized score test behaves much like multiple linear regression including all 100 features. On the other hand, when $\lambda = 0.6$, only 3 features are included in the lasso regression on all features, and the $p$-values are similar to those from simple linear regression performed on each feature separately. For the sake of comparison, we also plot the multiple linear regression $p$-values against those from simple linear regression, which demonstrates that the behavior of these two tests are quite different. Note that the penalized score test $p$-values are not identical to those from classical tests: inference with the penalized score test is with respect to the parameter $a_\lambda$, given in \eqref{eq:parama}, which, as discussed in Section~\ref{whatnull}, measures different types of associations than those measured with simple or multiple linear regression.

\begin{figure}[htbp]
\centering
\spacingset{1}
\footnotesize
\includegraphics[width=\textwidth]{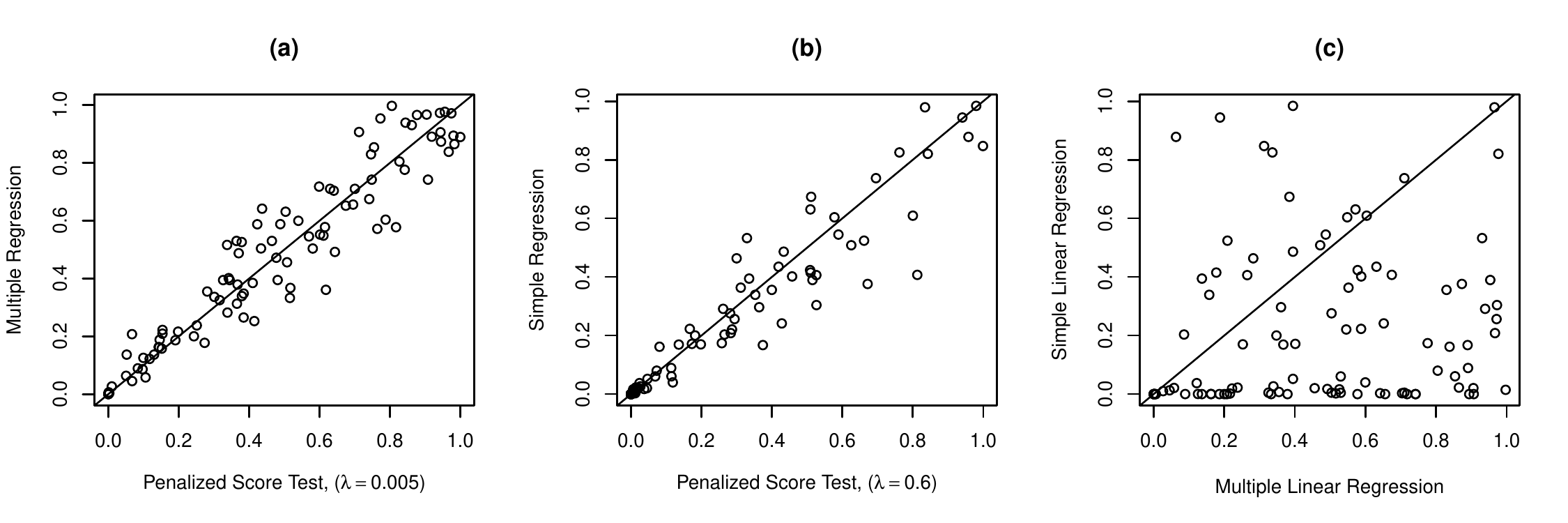}
\caption{Comparison of lasso-penalized score test $p$-values for $H_{0,\lambda}: a_\lambda = 0$ to traditional $p$-values for $H_0: \alpha = 0$, using the asymptotic variance formula \eqref{asymptoticvar}. (a) The $p$-values from multiple linear regression plotted against those from the penalized score test with $\lambda = 0.005$. (b) The $p$-values from simple linear regression plotted against those from the penalized score test with $\lambda = 0.6$. (c) Multiple and simple linear regression $p$-values plotted against each other.}
\label{fig:score:slr_vs_mlr}
\end{figure}

Figure~\ref{fig:score:sim} summarizes the results of the experiment over the $B=500$ replications. We see that simple linear regression provides high power, but also results in high EFP. Recall that simple linear regression will detect features which are marginally correlated with $\y$, whereas here we are interested in the conditional relationships. When $\lambda$ is large, the penalized score test has nearly identical power and EFP to simple linear regression; this is not surprising, since in that setting the penalized score test is almost identical to simple linear regression (see e.g. Figure~\ref{fig:score:slr_vs_mlr}). As $\lambda$ is decreased, the number of false positives converges to the nominal rate, for all sample sizes $n$ and dimensions $d$ considered, at the cost of lower power. This reduction in power should be expected: as $\lambda$ is decreased, more features enter the lasso regression \eqref{eq:lassocoef}, making it increasingly difficult to distinguish the effect of the feature $\x$ from the effects of the other correlated features in the model. In the extreme case of $\lambda =0$, or equivalently, using multiple linear regression, power is quite low. We see that the performance of our method using $\lambda = 0.05$ or $\lambda =0.07$ typically yields comparable type-I error, and slightly higher power, than LDPE.

\begin{figure}[htbp]
\centering
\spacingset{1}
\footnotesize
\includegraphics[width=\textwidth]{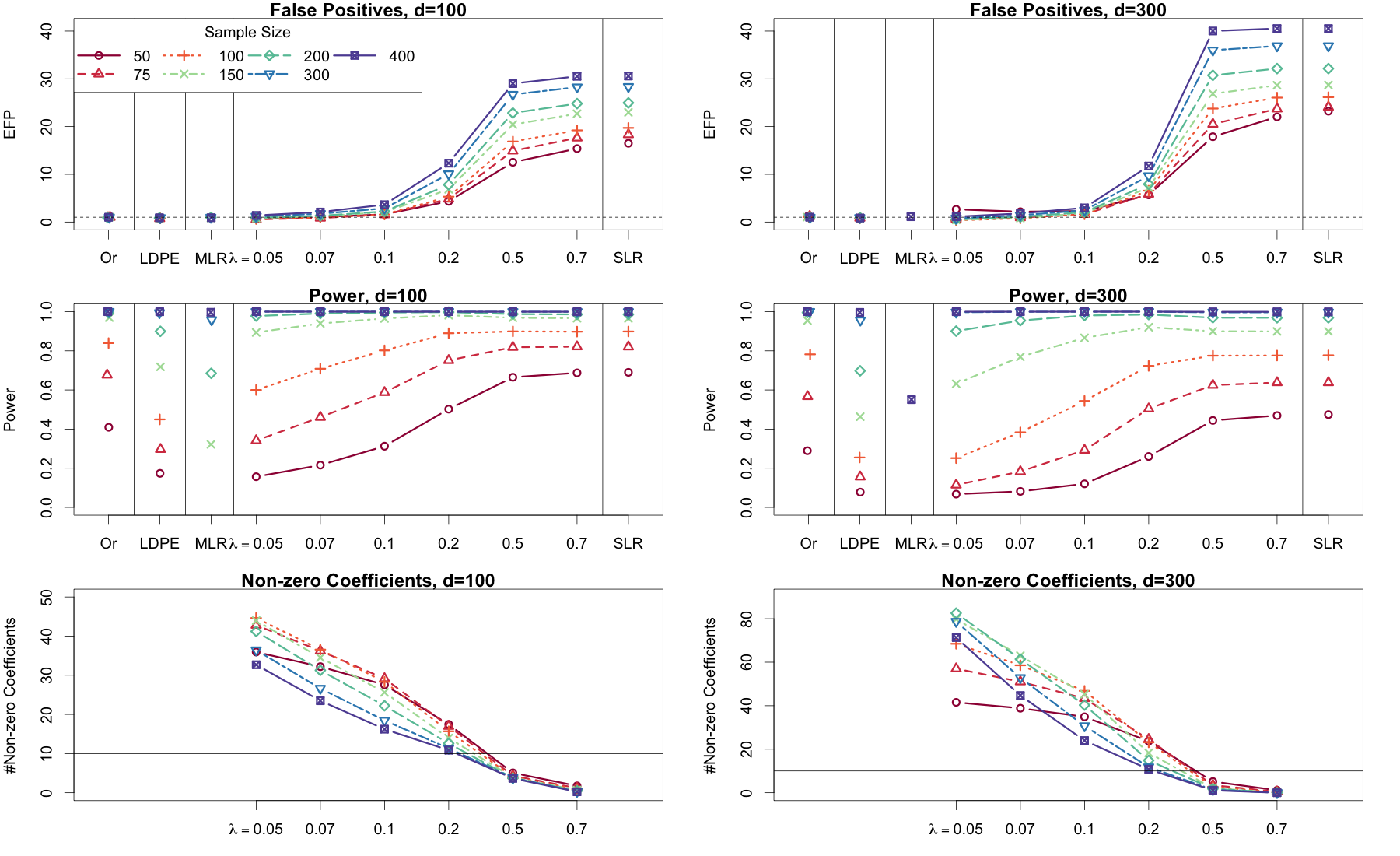}
\caption{Simulation experiments. `Or' indicates the the oracle test, `LDPE' indicates the method of \citet{zhang2011confidence} and \citet{van2013asymptotically}, `MLR' indicates multiple linear regression, and `SLR' indicates the results from simple linear regression. The six values of $\lambda$ correspond to the lasso-penalized score test. In the top panels, the horizontal line indicates the nominal error rate $(d-10)/d\approx 1$. In the bottom panels, the horizontal line indicates the true number of non-zero coefficients $\|\bbeta\|_0 = 10$. Results are averaged over 500 simulated data sets.}
\label{fig:score:sim}
\end{figure}

In the bottom panels of Figure~\ref{fig:score:sim}, we also plot the number of non-zero coefficients in lasso-penalized regression of $\y$ on $\X$. EFP is closest to the nominal rate when $\lambda$ is smallest, and here we see that this results in many more non-zero lasso coefficients than the number of truly non-zero coefficients. That is, in order to strictly control the error rate, it seems beneficial to choose $\lambda$ to be smaller than one would choose it if variable selection with the lasso were the goal. This assertion is supported by the theory in Section~\ref{sec:bias}, where we showed that $T_\lambda$ can diverge to $\pm \infty$ under $H_0: \alpha = 0$, if $\lambda$ is chosen so that the lasso selects the correct model (i.e. $\sqrt{n}\lambda \rightarrow \infty$ while $\lambda \rightarrow 0$). On the other hand, in Section~\ref{whatnull} we showed that $T_\lambda$ is centered around zero under $H_0: \alpha = 0$ when $\sqrt{n}\lambda \rightarrow 0$.

\subsection{Diabetes data}\label{sec:diabetes}
Here we re-examine the diabetes data set from Section~\ref{sec:1:intro}. We apply the lasso-penalized score test, for a range of 300 values of $\lambda$ between $0$ and $50$. We estimate $\sigma^2_\epsilon$ by the residual variance from multiple linear regression on all features.

\begin{figure}[htbp]
\centering
\spacingset{1}
\footnotesize
\includegraphics[width=\textwidth]{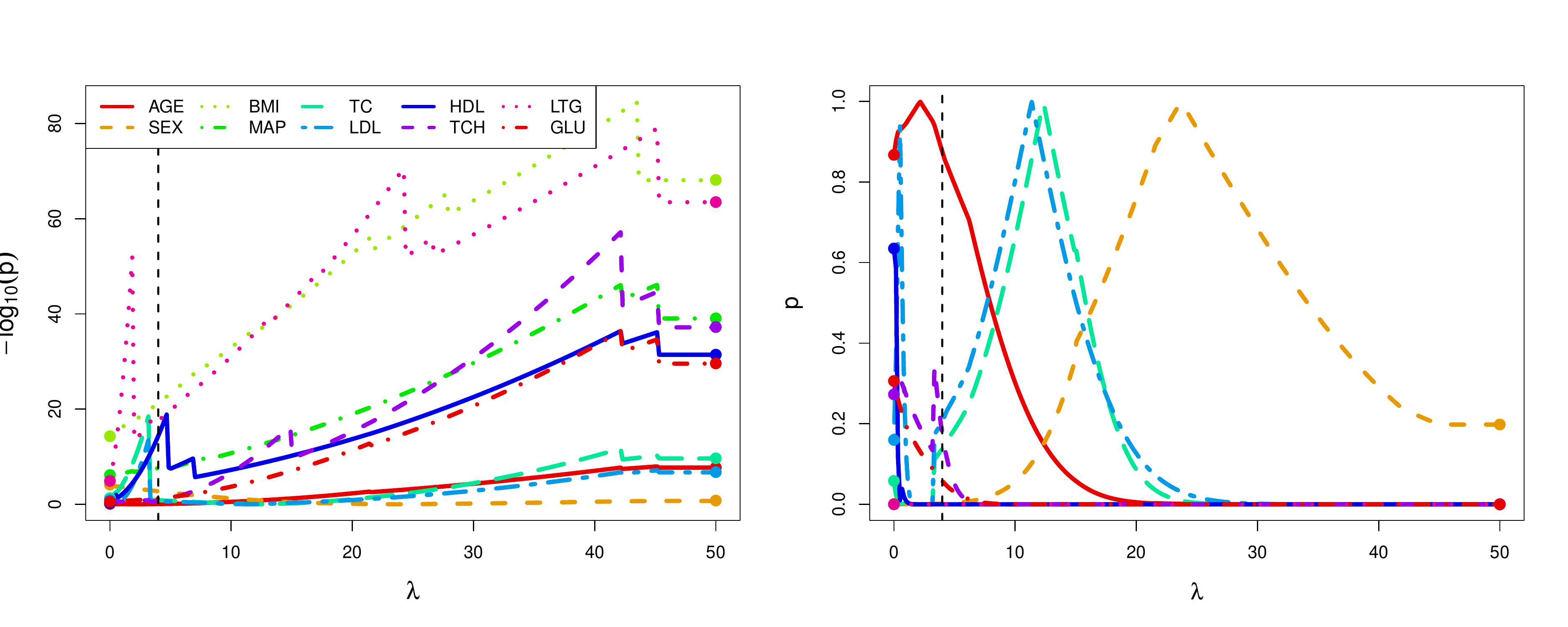}
\caption{Diabetes data set. Lasso-penalized score test $p$-values were generated using the asymptotic variance formula \eqref{asymptoticvar}. The vertical line at $\lambda=4$ indicates the value chosen to produce $p$-values for the penalized score test given in Table~\ref{tab:diab}. Dots at $\lambda=0$ and $\lambda=50$ indicate $p$-values from multiple linear regression on all features, and simple linear regression on each feature alone.    }
\label{fig:score:diab}
\end{figure}

Figure~\ref{fig:score:diab} summarizes the results of this analysis, showing both $-\log_{10}$ and un-transformed $p$-values for each of the $\lambda$ values, using the asymptotic variance formula \eqref{asymptoticvar}. For comparison, we also plot the multiple and simple linear regression $p$-values; these are displayed in Figure~\ref{fig:score:diab} at the far left side ($\lambda=0$) and on the far right side ($\lambda=50$) respectively. We see that the $p$-values from the penalized score test interpolate the multiple linear regression $p$-values and the simple linear regression $p$-values, and vary widely depending on the value of $\lambda$ chosen. The $p$-values for each feature are piece-wise continuous, with jumps when the set $\mathrm{supp}(\hat \b_\lambda^0) = \hat \A$ changes. The jumps typically result in smaller $p$-values immediately after an element of $\hat \b_\lambda^0$ becomes non-zero, since the variance of the test statistic, $\sigma^2_\epsilon\x^T\left({\bf I}_n-\bP_{\hat \A}\right)\x/n$, will be smaller when the size of the set $\hat \A$ is larger. This phenomenon is investigated in greater detail in Section~\ref{sec:thresh}.

The vertical line in Figure~\ref{fig:score:diab} indicates the value of $\lambda=4$, chosen to produce the $p$-values in Table~\ref{tab:diab}. With this value of $\lambda$, 4 of the 10 covariates (AGE, TC, LDL, and TCH) have zero coefficients in lasso regression on all features, while the rest are non-zero. Lasso regression on all features with this choice of $\lambda$ yields an $R^2$ of 0.50, compared to the $R^2$ of 0.52 using multiple linear regression on all features. Both in Table~\ref{tab:diab} and in Figure~\ref{fig:score:diab}, we see that this value of $\lambda$ results in $p$-values which are qualitatively similar to those from multiple linear regression on all features. However, it is notable that HDL appears strongly associated in this lasso-penalized score test, but not in the multiple linear regression. The multiple linear regression results suggest that the effects of HDL can be explained by other features in the data set; using the penalized score test with $\lambda=4$, the effects of other features are sufficiently shrunken towards zero so that HDL appears associated with the response.

\begin{figure}[htbp]
\centering
\spacingset{1}
\footnotesize
\includegraphics[width=\textwidth]{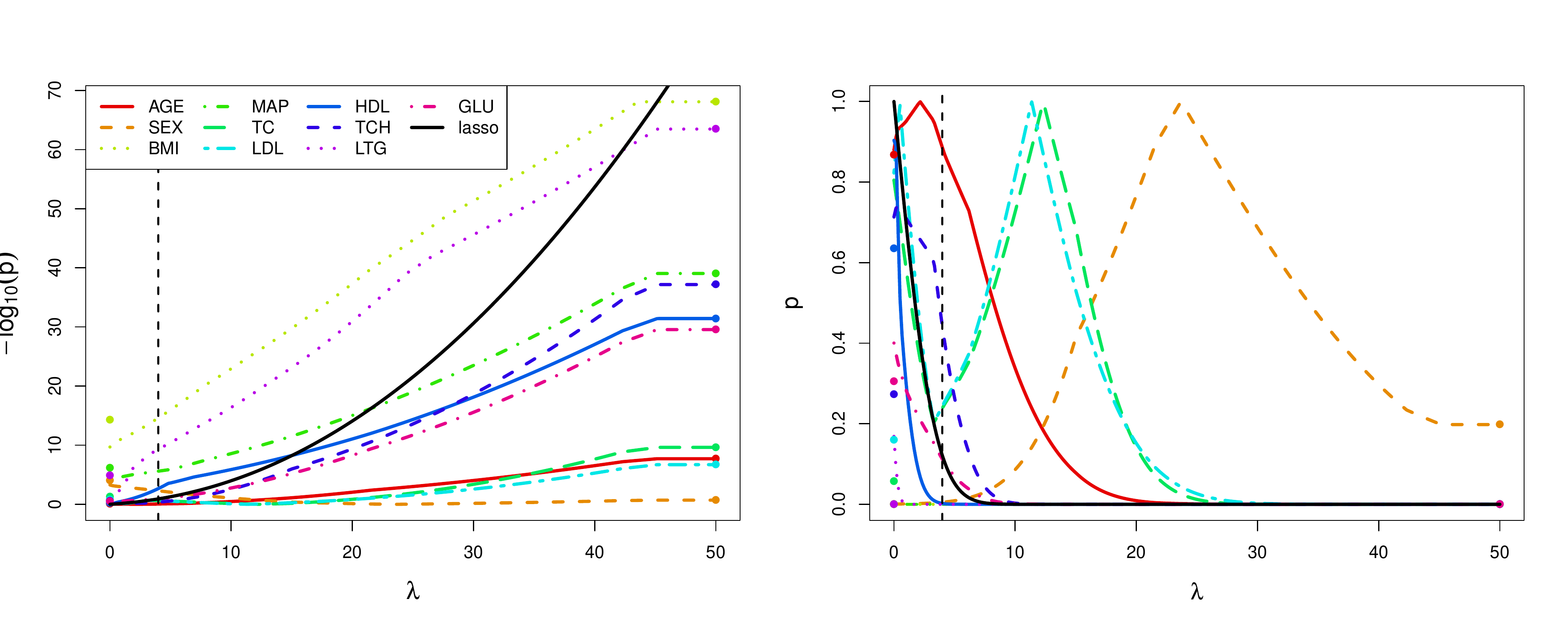}
\caption{Diabetes data set. Lasso-penalized score test $p$-values generated using the conservative variance formula \eqref{conservativevar}. The lasso decision rule, shown in black, corresponds to $\Phi(-2\sqrt{n}\lambda/\sigma_\epsilon)$, where $\Phi(\cdot)$ is the standard normal distribution function.}
\label{fig:score:diab2}
\end{figure}

In Figure~\ref{fig:score:diab2} we once again show $p$-values corresponding to the lasso score test applied to the diabetes data. This time, however, we calculate $p$-values using the conservative variance estimate $\widehat{\var}(T_\lambda) = \sigma^2_\epsilon$, in \eqref{conservativevar}, for each feature. Since this variance formula does not depend on the support of $\hat \b^0_\lambda$, the $p$-values are continuous curves. Further, since the same reference distribution is used for each feature, the decision rule which yields the sparsity pattern of the lasso, `reject $H_{0,\lambda}$ when $|T_\lambda| > \sqrt{n}\lambda$', corresponds to the same $p$-value threshold for each feature. This threshold is displayed as a thick black line in Figure~\ref{fig:score:diab2}; when the $p$-value for a feature crosses the line, its coefficient becomes non-zero in the lasso regression. For instance, at $\lambda = 4$, the $p$-value threshold is 0.12. The variables AGE, TC, LDL, and TCH have $p$-values above this threshold and thus have zero coefficients.

\subsection{Assessing the impact of thresholding}\label{sec:thresh}
It is well-known that the finite sample distributions of estimators that involve thresholding may be far from their large-sample limits. The canonical example of this phenomenon is Hodges' `super-efficient' estimator \citep[see e.g.][page 589]{LC_hodges}; similar behavior has also been observed in lasso-type estimators \citep{KF2000,potscher2009distribution}. In this section, we explore how thresholding can impact the type-I error rate of the penalized score test, when the lasso is used as the penalty. We follow the example of \citet{leeb2005model}, and consider the two-variable case, where the exact distribution of $T_\lambda$ is simple to obtain. As we will see, our proposed test can be either conservative or anti-conservative, depending on the underlying parameters, and the nominal type-I error of the test.

Suppose we are interested in the effect of a variable $\x\in \mathbb{R}^n$, adjusted for an additional variable $\z \in \mathbb{R}^n$.   Further suppose that $\y = \alpha\x + \beta\z + \beps$, where $\x$ and $\z$ are fixed,  $\x^T\z/n = \rho$, $\x^T\x/n=\z^T\z/n=1$, and $\beps \sim  N_n(0, {\bf I}_n)$. As a reminder, we are testing the effect $a_\lambda = \alpha + \rho(\beta - b_\lambda)$, given in \eqref{eq:parama}.

 In this simple case, the lasso-penalized score test has two steps:
\begin{enumerate}
\item Regress $\y$ on $\z$ using the lasso. This corresponds to soft-thresholding the quantity $\y^T\z/n$. That is, we set $\hat b^0_\lambda =\mathrm{sign}(\y^T\z) (|\y^T\z/n| - \lambda)_+$, an estimate of $b_\lambda$ under the null hypothesis $H_{0,\lambda}: a_\lambda=0$.\label{step1}
\item Construct $T_\lambda = \x^T(\y -\hat b_\lambda^0\z) /\sqrt{n}$, and compare to a normal reference distribution, with variance to be specified next.
\end{enumerate}
Under $H_{0,\lambda}: a_\lambda = 0$, Proposition~\ref{thm:lassodist} shows that in large samples, $T_\lambda$ should have variance $1-\rho^2$ when $|\E\y^T\z/n| > \lambda$, and 1  otherwise. In finite samples, we can either use the asymptotic estimate \eqref{asymptoticvar} (i.e. use $\widehat{\var}(T_\lambda) =1-\rho^2$ when $|\y^T\z/n| > \lambda$, and $\widehat{\var}(T_\lambda)=1$  otherwise), or the conservative estimate \eqref{conservativevar} (i.e. always use variance  $\widehat{\var}(T_\lambda)=1$). We will investigate the behavior of the test for both estimators.

Here, $T_\lambda$ can be written explicitly as
\begin{equation} \label{texplicit}
T_\lambda =\begin{cases}
\sqrt{n}[(1-\rho^2)\alpha + \rho\lambda] + (\x - \rho\z)^T\beps/\sqrt{n} & \,\,\text{if}\quad \z^T\y/n \geq \lambda\\
\sqrt{n}(\alpha + \rho\beta)+ \x^T\beps/\sqrt{n} & \,\,\text{if}\quad|\z^T\y/n| < \lambda \\
\sqrt{n}[(1-\rho^2)\alpha - \rho\lambda] + (\x - \rho\z)^T\beps/\sqrt{n} & \,\,\text{if}\quad \z^T\y/n \leq -\lambda
\end{cases} .
\end{equation}

It is easy to see that, conditioned on $\z^T\y \sim N(n(\rho\alpha+\beta), n)$, $T_\lambda$ is normally distributed. To find the marginal distribution of $T_\lambda$, we simply calculate $\E_{\z^T\y }[\,T_\lambda \mid \z^T\y \,]$, by numerical integration.

Our goal is to determine the impact of thresholding on type-I error. In order to do this, we choose $(\alpha,\beta)$ such that (i) the null hypothesis $H_{0,\lambda}: a_\lambda = 0$ is true, and (ii) the probability $\mathrm{Pr}(\z^T\y/n \geq \lambda)$ is controlled to be $\gamma$.  Since $\z^T\y/n \sim N(\rho\alpha+\beta, 1/n)$, we must have that $\beta = \Phi^{-1}(\gamma)/\sqrt{n} + \lambda - \rho\alpha$ in order to achieve $\mathrm{Pr}(\z^T\y/n \geq \lambda)=\gamma$, where $\Phi(\cdot)$ is the standard normal distribution function. In order to have $a_\lambda = 0$, we must have $\alpha+\rho(\beta-b_\lambda)=0$, where $b_\lambda =\mathrm{sign}(\E[\,\y^T\z\,]) (|\E[\,\y^T\z/n\,]| -\lambda)_+ = \mathrm{sign}(\alpha\rho + \beta)(\alpha\rho + \beta-\lambda)_+$. Thus, with restrictions (i) and (ii), we must have that 
\begin{align*}
\alpha &= \begin{cases}
-\rho \lambda/(1-\rho^2) & \text{if}\quad \gamma > 0.5   \\
-\rho (\Phi^{-1}(\gamma)/\sqrt{n} + \lambda)/(1-\rho^2) & \text{if}\quad \Phi(-2\sqrt{n}\lambda) \leq \gamma \leq 0.5\\
\rho \lambda/(1-\rho^2) & \text{if}\quad \gamma <  \Phi(-2\sqrt{n}\lambda) \\
\end{cases} \\
 \beta &=  \Phi^{-1}(\gamma)/\sqrt{n} + \lambda - \rho\alpha.
\end{align*}
The cases $\gamma > 0.5$, $\Phi(-2\sqrt{n}\lambda) \leq \gamma \leq 0.5$, and $\gamma <  \Phi(-2\sqrt{n}\lambda)$ correspond to the cases $\E\z^T\y/n > \lambda$, $|\E\z^T\y/n| \leq \lambda$, and $\E\z^T\y/n < -\lambda$ respectively. Note that for fixed $\lambda$, $\Phi(-2\sqrt{n}\lambda)\approx 0$ for large $n$. 

When $|\E\z^T\y/n| \leq \lambda$, or equivalently, when $ \Phi(-2\sqrt{n}\lambda) \leq \gamma \leq 0.5$, then $ b_\lambda = 0$ in truth. However, with probability $\gamma$, we will erroneously have $\hat{ b}^0_{\lambda} > 0$. Likewise, when $\E\z^T\y/n > \lambda$, or equivalently, when $\gamma > 0.5$, we then have $b_\lambda > 0$, but with probability $1-\gamma-\Phi[-2\sqrt{n}\lambda - \Phi^{-1}(\gamma)]$, which is approximately $1-\gamma$ for $\sqrt{n}\lambda$ large enough, we erroneously set $\hat b_\lambda^0 =0$. Thus, by varying $\gamma$, we can examine the impact of erroneously including or excluding a feature in the lasso regression. 

Figure~\ref{fig:local} displays the relative type-I error of the test under $H_{0,\lambda}: a_\lambda = 0$, i.e. (observed type-I error)/(nominal type-I error), for a range of values of $\gamma$, using both the asymptotically derived variance estimate \eqref{asymptoticvar} and the conservative variance estimate \eqref{conservativevar}. We chose $\lambda=0.2$, and $n=500$. Note that by parametrizing the coefficients by $\gamma$, the results depend only very weakly on $n$; similar curves can be obtained for arbitrarily large sample sizes. 

\begin{figure}[htbp]
\centering
\spacingset{1}
\footnotesize
\includegraphics[width=\textwidth]{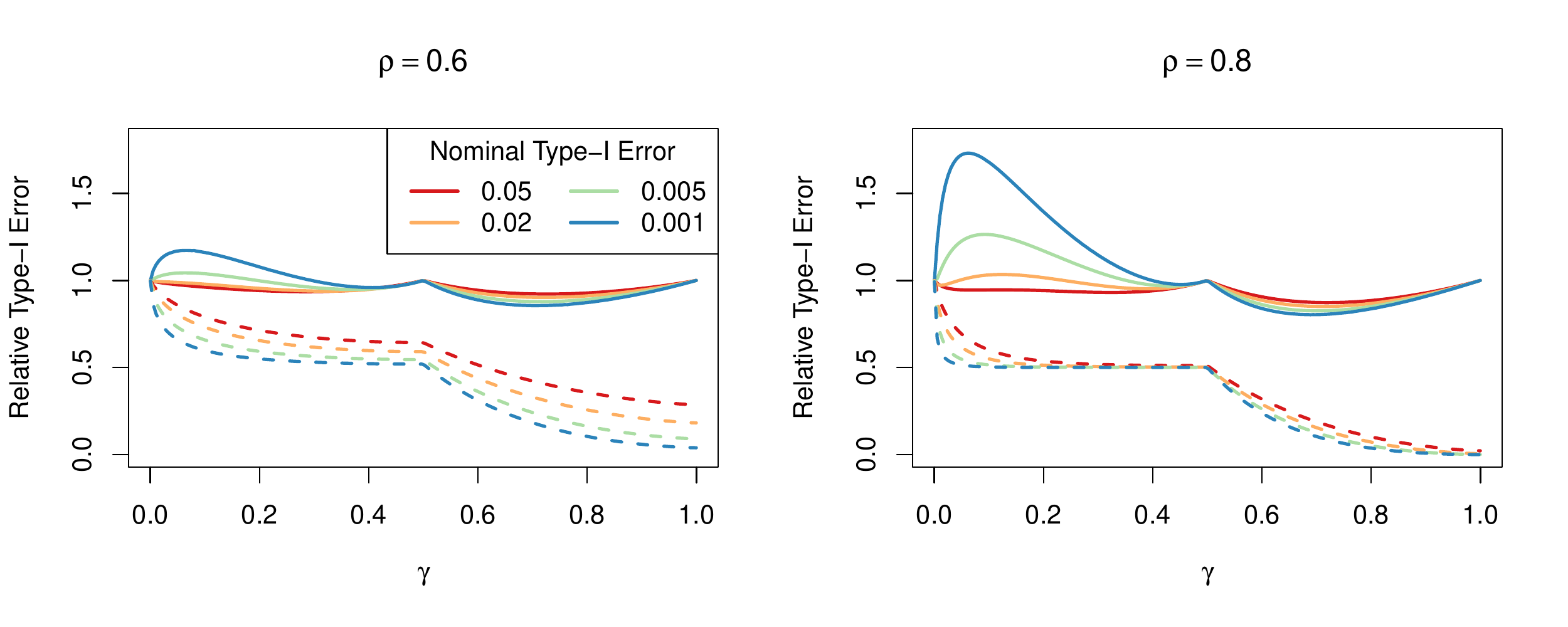}
\caption{Relative type-I error rate, i.e. (observed type-I error)/(nominal type-I error), as a function of $\gamma = \Pr(\hat b^0_\lambda > 0)$. Solid lines indicate the error rates for the asymptotic variance formula \eqref{asymptoticvar}, while dashed lines indicate the error rates when using the conservative variance \eqref{conservativevar}. Note that the inflection point at $\gamma=0.5$ is due to the fact that $b_\lambda =0$ when $\Phi^{-1}(-2\sqrt{n}\lambda)<\gamma \leq 0.5$, while $b_\lambda > 0$ when $\gamma > 0.5$.}
\label{fig:local}
\end{figure}

We see that when we use the conservative variance (dashed lines), the test is indeed conservative. When $\gamma=0$ the observed error rate is identical to the nominal rate, while the test becomes increasingly conservative as $\gamma$ increases. 

On the other hand, when we use the asymptotic variance formula (solid lines), the test can be anti-conservative when both the nominal type-I error rate and $\gamma$ are small, but is otherwise conservative. In general, the behavior of the test is worse for small type-I error rates, and when the correlation between the features is larger.

{A reviewer suggested that one could estimate the distribution of $T_\lambda$ more accurately using the methods described by  \citet{andrews2009hybrid}. In order to implement this procedure, one would use a critical value for $T_\lambda$ based on a hybrid of $m$-of-$n$ bootstrap samples, and the critical values based on an asymptotic distribution of the test.  Alternately, the framework of \citet{berk2012valid} could be used to obtain a conservative version of the test.} 

\section{The ridge-penalized score test}\label{sec:ridge}
In this section, we examine in greater detail the penalized score test where the ridge penalty is used in \eqref{eq:restricted}. That is, we first obtain the penalized coefficient vector
\begin{equation}
\hat \b^0_\lambda = \argmin_{\b \in \mathbb{R}^{d-1}}\left\{\|\y - \Z\b\|_2^2/(2n) + \lambda \|\b\|^2_2/2\right\},
\end{equation}
and then form the test statistic $T_\lambda = (\y-\Z\hat \b_\lambda^0)/\sqrt{n}$ in \eqref{eq:teststat}. 

First, we prove Proposition~\ref{prop:mixed}. Let ${\H}_{\Z} \equiv \Z(\lambda {\bf I}_{d-1} + \Z^T\Z/n)^{-1}\Z^T/n$ denote the $n\times n$ smoother matrix from ridge regression. Here we can write $T_\lambda$ as
\begin{equation}\label{eq:ridge}
T_\lambda = \x^T({\bf I}_n -{\H}_{\Z})\y/\sqrt{n}.
\end{equation}

We now show that \eqref{eq:ridge} can be interpreted as the score statistic from a mixed model. If we assume  \eqref{eq:mixed}, then the marginal distribution of $\y$ is
\begin{equation}\label{ymarg}
\y \sim N_n\left(\alpha \x, \, \sigma^2_\epsilon [(n\lambda)^{-1}\Z\Z^T + {\bf I}_n]\right).
\end{equation}
Writing $l(\alpha)$ as the log-likelihood of the data under \eqref{ymarg}, we can write the score, evaluated at $\alpha=0$, as
$$ \dot l(0) = \frac{1}{\sigma^2_\epsilon} \x^T\left[(n\lambda)^{-1}\Z\Z^T + {\bf I}_n\right]^{-1}\y.$$
Recognizing that $[(n\lambda)^{-1}\Z\Z^T + {\bf I}_n]^{-1} = ({\bf I}_n - \H_{\Z})$, we see that the score for testing $\alpha=0$ in \eqref{eq:mixed} is
$$\dot l(0) =\frac{1}{\sigma^2_\epsilon} \x^T({\bf I}_n -\H_{\Z})\y, $$
which is a scaled form of~\eqref{eq:ridge}. That is, the ridge-penalized score statistic is equivalent to the score statistic for testing the effect of feature $\x$ in a mixed model, where all the other features have normally distributed random effects with variance $\sigma^2_\epsilon(n\lambda)^{-1}$. 

The distribution of the ridge-penalized score statistic $T_\lambda$, or equivalently $\dot l(0)$, depends on whether we consider $\bbeta$ to be fixed under the null hypothesis $H_{0,\lambda}:a_\lambda = 0$ in \eqref{eq:params}, or random under the null hypothesis $H_0:\alpha = 0$ in \eqref{eq:mixed}.  With $\bbeta$ fixed, solving \eqref{eq:params} when $a_\lambda = 0$ yields $\Z\b_\lambda = {\bf H}_\Z(\alpha\x + \Z\bbeta)$, and thus $\E[T_\lambda \mid \bbeta ]=\x^T({\bf I}_n - {\bf H}_\Z)(\alpha\x + \Z\beta)/\sqrt{n} = \sqrt{n}[\,\alpha + \bsigma_{xz}^T(\bbeta - \b_\lambda)\,] = 0$. Thus, we can write the exact distribution of $T_\lambda$ under $H_{0,\lambda}: a_\lambda = 0$ as
\begin{equation}\label{conditional}
T_\lambda \mid \bbeta \stackrel{H_{0,\lambda}}{\sim} N(0, \sigma^2_\epsilon \x^T({\bf I}_n - {\bf H}_\Z)^2\x/n),
\end{equation}
since $T_\lambda$ is simply a linear function of a normal vector with mean zero. On the other hand, when $H_0: \alpha = 0$ in the mixed model \eqref{eq:mixed}, we can use the marginal distribution of $\y$ in \eqref{ymarg} to obtain
\begin{equation}\label{unconditional}
T_\lambda \stackrel{H_0}{\sim} N(0, \sigma^2_\epsilon \x^T({\bf I}_n - {\bf H}_\Z)\x/n).
\end{equation}
It is easy to show that $\x^T({\bf I}_n - {\bf H}_\Z)^2\x \leq  \x^T({\bf I}_n - {\bf H}_\Z)\x$, and thus $T_\lambda$ has a smaller variance if we assume that $\bbeta$ is fixed.

In our usual interpretation of the penalized regression~\eqref{eq:restricted}, we do not consider the effects of the features to be random draws from some population. Instead, we can motivate the use of ridge regression with the desire for an estimate of $\bbeta$ with smaller variance than the multiple linear regression estimate \citep{draper1979ridge}. Using a mixed-effects framework when $\bbeta$ is non-random is also discussed by \citet{hodges2010adding} in the context of spatial statistics, and \citet{greenland2000should} in the context of epidemiology.

\section{Extension to other sparsity-inducing penalties}\label{sec:noncon}
We have shown that the sparsity pattern of the lasso can be interpreted as resulting from a statistical test. In this section, we show that other sparsity-inducing penalties, such as \emph{smoothly clipped absolute deviation} (SCAD) \citep{fan2001variable} and the \emph{elastic net} \citep{ZH2004}, can also be interpreted in a similar framework. 

Suppose we obtain sparse estimates $(\hat \alpha_\lambda,\hat\b_\lambda)$ of the linear regression parameters $(\alpha,\bbeta)$, by solving the optimization problem
\begin{equation}\label{ext1}
(\hat a_\lambda,\hat \b_\lambda) =\argmin_{(a,\b) \in \mathbb{R}^d} \left\{\frac{1}{2n}\|\y-a\x-\Z\b\|_2^2 + \lambda J(a,\b)\right\}.
\end{equation}
A common characteristic of sparsity-inducing penalties is non-differentiability of $J(a,\b)$ around zero. Suppose $J(a,\b)$ is symmetric about zero, with subdifferential $\frac{\partial}{\partial a} J(a,\b) \mid_{a=0}= [-1,1]$. The elastic net and SCAD are two such examples. Now, denote 
\begin{equation}\label{ext2}
 \hat \b^0_\lambda = \argmin_{\b \in \mathbb{R}^{d-1}}\left\{\frac{1}{2n}\|\y-\Z\b\|_2^2 + \lambda J(0,\b)\right\}.
\end{equation}
A necessary condition for $\hat a_\lambda = 0$ in \eqref{ext1} is that $\x^T(\y-\Z\hat\b_\lambda^0)/n \in [-\lambda,\lambda]$. Using the notation from Section~\ref{sec:method}, we have that
$$\{ \hat a_\lambda = 0\} \implies \{|T_\lambda| \leq \sqrt{n}\lambda\}.$$
In addition, if $J(a,\b)$ is convex, as is the case for the elastic net, then
$$\{\hat a_\lambda = 0 \}\iff \{|T_\lambda| \leq \sqrt{n}\lambda\}.$$

In other words, the sparsity pattern of coefficient vectors induced by any convex penalty with subdifferential $[-1,1]$ at the origin can be interpreted as a decision made based on the penalized score statistic $T_\lambda$. 

For non-convex penalties, such as SCAD, the penalized score test gives only a necessary condition for regression parameters to be zero. In practice, however, the penalized score test is in some sense both necessary and sufficient to determine the sparsity pattern produced by non-convex penalties. Solutions to non-convex problems like SCAD are often found using coordinate-descent procedures, which solve for a local optimum of \eqref{ext1} by iteratively minimizing with respect to each element of $(a,\b)$ \citep{zou2008one,breheny2011coordinate,mazumder2011sparsenet}. If we use $(0,\hat \b^0_\lambda)$ as initial values, and then solve $\eqref{ext1}$ using coordinate descent, the algorithm will converge to   $(0,\hat \b^0_\lambda)$ when $|T_\lambda| \leq \sqrt{n}\lambda$. That is, $\{|T_\lambda| \leq \sqrt{n}\lambda\}$ is also sufficient for $\{\hat a_\lambda=0\}$, when using this algorithm.

In order to obtain $p$-values for these other sparsity-inducing penalties, we require an appropriate reference distribution, which we leave for future work.

\section{Discussion} \label{choosinglambda}
In this paper, we presented the penalized score test, in which the hypothesis being tested depends on the value of the tuning parameter $\lambda$. Therefore, {$\lambda$ should be chosen to yield a test which is scientifically meaningful. For instance, if the simple linear regression parameter $\alpha + \bsigma_{xz}^T\bbeta$ is a scientifically meaningful target of inference, one should choose $\lambda$ to be large (i.e. perform simple linear regression).} On the other hand, $\lambda$ should be chosen as small as possible when the multiple linear regression parameter $\alpha$ in \eqref{mymodel} is of interest. Perhaps the simplest way to choose $\lambda$ in this case is to specify how many degrees of freedom we are willing to invest in estimating the nuisance parameter $\bbeta$. {Whether this controls type-I error of the penalized score test at an acceptable level is highly context-specific, and in general difficult to ascertain.} With any $\lambda > 0$, some bias in $\hat \b^0_\lambda$ relative to $\bbeta$ will be incurred, in which case the penalized score test may be thought of as a pragmatic approximation to classical tests, {when multiple linear regression is undefined, or produces coefficient estimates which are too variable to be useful.} 

In this manuscript, we focused on testing the hypothesis $H_{0,\lambda}: a_\lambda=0$ using a score test. The test does estimate effect sizes or provide confidence intervals. However, estimates of effect size can be obtained by fitting the sample version of \eqref{eq:params}, i.e. $(\hat a_\lambda,\hat \b_\lambda) = \argmin_{(a,\b)}\left\{\|\y-a\x-\Z\bbeta\|_2^2/(2n) + \lambda J(\b)\right\}$. Confidence intervals for $a_\lambda$ when using the ridge penalty ($J(\b) = \|\b\|_2^2/2$) are available from mixed-model theory. For the lasso ($J(\b) = \|\b\|_1$), slight modifications of our theory in Section~\ref{sec:lasso} can be used to show that 
$
\sqrt{n}(\hat a_\lambda - a_\lambda) \rightarrow_d N\left(0, \sigma^2_\epsilon\left[ \x^T({\bf I}_n  - {\bf P}_\A)\x/n\right]^{-1} \right)
,$ under ({\bf A1-6}). This result might be used for a penalized version of the Wald test.

Several possible extensions of the proposed method are outlined here. Instead of testing for the effect of a single feature $\x \in \mathbb{R}^n$, we can test for groups of $k$ variables $\X \in \mathbb{R}^{n \times k}$, with the score statistic
$
T_\lambda = \X^T(\y - \Z\hat\b_\lambda^0)/\sqrt{n} \in \mathbb{R}^k
$. The distribution of $T_\lambda$, under an appropriate null hypothesis, follows in a straightforward way from Proposition~\ref{prop:mixed} for a ridge penalty, or from Proposition~\ref{thm:lassodist}, for a lasso penalty. From Proposition~\ref{prop:score}, we know that in the lasso regression of $\y$ on $(\X,\Z)$, one or more of the coefficients associated with $\X$ is non-zero if and only if $\|T_\lambda\|_\infty > \sqrt{n}\lambda$, where $T_\lambda$ is constructed using the lasso penalty. Analogous to Proposition~\ref{prop:score}, the sparsity pattern of the group lasso \citep{grouplasso} and the standardized group lasso \citep{standgl} could also be understood in terms of restrictions on this score statistic $T_\lambda$, for an appropriate choice of penalty function $J(\b)$.

The lasso-penalized score test is implemented in the \verb!lassoscore! \verb!R! package, available on the Comprehensive R Archive Network (CRAN).

\section*{Acknowledgments} We thank Dezure Ruben and Peter B\"uhlmann for providing \verb!R! code for the LDPE method, and Adam Szpiro and Ken Rice for valuable insights.

%%%%%%%%%%%%%%%%%%%%%%%%%%%%%%%%%%%%%%%%%%%%%%%%%%%%%%%%%%%%%%%%%%%
%%%%%%%%%%%%%%%%%%%%%%%%%%%%%%%%%%%%%%%%%%%%%%%%%%%%%%%%%%%%%%%%%%%
%%%%%%%%%%%%%%%%%%%%%%%%%%%%%%%%%%%%%%%%%%%%%%%%%%%%%%%%%%%%%%%%%%%
%%%%%%%%%%%%%%%%%%%%%%%%%%%%%%%%%%%%%%%%%%%%%%%%%%%%%%%%%%%%%%%%%%%
%%%%%%%%%%%%%%%%%%%%%%%%%%%%%%%%%%%%%%%%%%%%%%%%%%%%%%%%%%%%%%%%%%%
%%%%%%%%%%%%%%%%%%%%%%%%%%%%%%%%%%%%%%%%%%%%%%%%%%%%%%%%%%%%%%%%%%%
%%%%%%%%%%%%%%%%%%%%%%%%%%%%%%%%%%%%%%%%%%%%%%%%%%%%%%%%%%%%%%%%%%%
%%%%%%%%%%%%%%%%%%%%%%%%%%%%%%%%%%%%%%%%%%%%%%%%%%%%%%%%%%%%%%%%%%%
%%%%%%%%%%%%%%%%%%%%%%%%%%%%%%%%%%%%%%%%%%%%%%%%%%%%%%%%%%%%%%%%%%%

\appendix
\section{Appendix: Technical Proofs}
In this section we prove Proposition~\ref{thm:lassodist}. First, in Lemma~\ref{basiclemma}, we state and prove a basic result. We then proceed by stating and proving additional lemmas needed in the proof of Proposition~\ref{thm:lassodist}.

\setcounter{lemma}{0}
%%%%%%%%%%%%%%%%%%%%%%%%%%%%%%%%%%%%%%%%%%%%%%%%%%%%%%%%%%%%%%%%%%%

\begin{lemma}\label{basiclemma}
Let $\{X_{ij}:\, i = 1, \dots, n;  j =1,\dots,d \}$ be a set of random variables such that $\{X_{1j}, \dots, X_{nj}\}$ are mutually independent. Assume $\E(X_{ij}) = 0$, and that there exist $h,c > 0$, not depending on $i$ or $j$ such that  $\Pr(|X_{ij}| \geq x) \leq 2 \exp(-h x^2), \,\forall x >c$. Denote $Z_{j} = \sum_{i=1}^n X_{ij}/\sqrt{n}$. Then 
$$\max_{j =1,\dots,d}|Z_{j}| = O_p\left(\log^{1/2}(d) \right).$$
\end{lemma}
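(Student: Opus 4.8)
The plan is to recognize this as a standard maximal inequality for a collection of sub-Gaussian random variables and to prove it in three moves: (i) upgrade the one-sided tail bound on each $X_{ij}$ to a uniform bound on its moment generating function; (ii) use independence across $i$ to transfer this to a sub-Gaussian bound for each normalized sum $Z_j$ that is uniform in both $n$ and $j$; and (iii) apply a union bound over the $d$ columns together with a Chernoff tail bound to control the maximum.

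First I would establish that the hypotheses on $X_{ij}$ imply $\E[\exp(s X_{ij})] \leq \exp(\kappa s^2)$ for all $s \in \mathbb{R}$, where $\kappa > 0$ depends only on $h$ and $c$, and in particular not on $i$ or $j$. Since the tail bound $\Pr(|X_{ij}| \geq x) \leq 2\exp(-hx^2)$ is assumed only for $x > c$, I would pass through moments: writing $\E[|X_{ij}|^k] = \int_0^\infty k x^{k-1}\Pr(|X_{ij}| \geq x)\,dx$ and splitting the integral at $c$ gives $\E[|X_{ij}|^k] \leq c^k + 2\int_0^\infty k x^{k-1}\exp(-hx^2)\,dx$, which is of the order $(C\sqrt{k})^k$ for a constant $C = C(h,c)$. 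Substituting these moment bounds into the series $\E[\exp(s X_{ij})] = 1 + \sum_{k \geq 2} s^k \E[X_{ij}^k]/k!$, whose linear term vanishes because $\E[X_{ij}] = 0$, and bounding term by term yields the claimed MGF bound. This moment-to-MGF passage is the only genuinely technical point, and I expect it to be the main obstacle; it is the standard equivalence between sub-Gaussian tails and a controlled MGF, so the real work lies in checking that the constant $\kappa$ stays uniform across $i$ and $j$.

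Next, because $\{X_{1j}, \dots, X_{nj}\}$ are mutually independent, the MGF of $Z_j = n^{-1/2}\sum_i X_{ij}$ factorizes, and I would compute
\begin{equation*}
\E[\exp(s Z_j)] = \prod_{i=1}^n \E[\exp(s X_{ij}/\sqrt{n})] \leq \prod_{i=1}^n \exp(\kappa s^2/n) = \exp(\kappa s^2),
\end{equation*}
so every $Z_j$ is sub-Gaussian with the same parameter $\kappa$, uniformly in $n$ and $j$. A Chernoff argument (optimizing $\exp(\kappa s^2 - st)$ over $s$ and using symmetry to handle $-Z_j$) then gives the tail bound $\Pr(|Z_j| \geq t) \leq 2\exp(-t^2/(4\kappa))$ for every $t > 0$.

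Finally I would combine a union bound with this tail estimate: $\Pr(\max_{j} |Z_j| \geq t) \leq 2d\exp(-t^2/(4\kappa))$. Taking $t = M\sqrt{\log d}$ yields $\Pr(\max_j |Z_j| \geq M\sqrt{\log d}) \leq 2 d^{\,1 - M^2/(4\kappa)}$, which tends to zero as $d \to \infty$ whenever $M > 2\sqrt{\kappa}$. Hence for every $\varepsilon > 0$ one can choose $M$ large enough that this probability is eventually below $\varepsilon$, which is exactly the statement $\max_j |Z_j| = O_p(\log^{1/2}(d))$.
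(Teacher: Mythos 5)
Your proof is correct and follows essentially the same route as the paper's: establish a uniform sub-Gaussian MGF bound for each $X_{ij}$, factorize the MGF of $Z_j$ over the independent summands, and finish with a Chernoff bound plus a union bound over the $d$ columns. The only difference is that you prove the tail-to-MGF equivalence from scratch via moment bounds, whereas the paper simply cites it (Lemma~14.2 of B\"uhlmann and van de Geer); your version is self-contained but otherwise identical in substance.
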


\begin{proof}
First we state a well-known equivalent definition of a sub-Gaussian random variable, which follows from, e.g. Lemma~14.2 in \citet{buhlmann2011statistics}. We have that $\Pr(|X_{ij}| \geq x) \leq 2 \exp(-h x^2)$ for $ x > c$ if and only if $M_{X_{ij}}(t) \leq \exp(-k t^2)$ for some $k>0$, where $M_{X_{ij}}(t)$ is the moment generating function of $X_{ij}$. Using this fact, we know that $Z_{j}$ is sub-Gaussian since $M_{Z_{j}}(t) = \prod_{i=1}^nM_{X_{ij}}(t/\sqrt{n}) \leq \exp(-kt^2)$. Applying the union bound, we get that
\begin{align*}
\Pr\left[\max_{j =1,\dots,d}|Z_{j}| > t\log^{1/2}(d)\right] &\leq \sum_{j=1}^d \Pr\left[\, |Z_{j}| > t\log^{1/2}(d)\right] \\&\leq  2d\exp(-h t^2\log(d)) \\
& = 2\exp(\log(d)[1-ht^2]) \\
& \leq 2\exp(\log(2)[1-ht^2]),
\end{align*}
where the last inequality holds when $ht^2 > 1$ and $d \geq 2$. Thus, we can choose a large value of $t$, not depending on $d$, such that $\Pr\left[\max_{j =1,\dots,d}|Z_{j}| > t\log^{1/2}(d)\right] $ is arbitrarily small, which gives the result.
\end{proof}

\begin{lemma}\label{1:mainlemma}
Suppose conditions ({\bf A1}),({\bf A5}) and ({\bf A6}) hold. Then the estimator 
$$
\tilde \b = \argmin_{\b : \|\b_{\A^c}\| = 0 }\left\{ \frac{1}{2n}\|\y-a_\lambda\x-\Z\b\|_2^2 + \lambda \|\b\|_1\right\} 
$$
satisfies $\|\tilde \b_\A - \b_{\lambda\A}\|_2 = O_p\left(\sqrt{q\log (q)/n }\right)$.
\end{lemma}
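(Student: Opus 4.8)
The plan is to reduce the problem to the Karush--Kuhn--Tucker (KKT) conditions of the restricted optimization on the active set $\A$, and to avoid the usual circularity between establishing the estimation rate and verifying that $\tilde\b_\A$ has the correct signs by exhibiting an \emph{explicit} candidate solution and checking that it satisfies the KKT conditions with probability tending to one. Because the restricted objective is strongly convex in $\b_\A$ (its Hessian is $\bSigma_\A$, whose minimum eigenvalue is bounded below by $\eta > 0$ under ({\bf A6})), the minimizer $\tilde\b_\A$ is unique, so matching a candidate to the KKT conditions pins down $\tilde\b_\A$ exactly.

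First I would record the population stationary condition restricted to $\A$. From \eqref{eq:stationarybl}, and since $\b_{\lambda\A^c} = \mathbf{0}$ so that $\Z\b_\lambda = \Z_\A\b_{\lambda\A}$, we have $\E[\Z_\A^T(\y - a_\lambda\x - \Z_\A\b_{\lambda\A})/n] = \lambda\,\mathrm{sign}(\b_{\lambda\A})$. Writing $\y = \alpha\x + \Z\bbeta + \beps$ and using that only $\beps$ is random, this yields the exact identity $\Z_\A^T(\y - a_\lambda\x - \Z_\A\b_{\lambda\A})/n = \lambda\,\mathrm{sign}(\b_{\lambda\A}) + \w$, where $\w \equiv \Z_\A^T\beps/n$. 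Next I would define the candidate $\b^\star$, supported on $\A$, with $\b^\star_\A = \b_{\lambda\A} + \bSigma_\A^{-1}\w$. A direct computation using this identity and $\bSigma_\A(\b^\star_\A - \b_{\lambda\A}) = \w$ shows that the stationarity residual satisfies $\Z_\A^T(\y - a_\lambda\x - \Z_\A\b^\star_\A)/n = \lambda\,\mathrm{sign}(\b_{\lambda\A})$. Hence, provided $\mathrm{sign}(\b^\star_\A) = \mathrm{sign}(\b_{\lambda\A})$, the vector $\lambda\,\mathrm{sign}(\b_{\lambda\A})$ is a valid subgradient of $\lambda\|\cdot\|_1$ at $\b^\star_\A$, so $\b^\star_\A$ satisfies the KKT conditions of the restricted problem and, by strong convexity and uniqueness, $\tilde\b_\A = \b^\star_\A$.

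The estimation rate then reduces to bounding $\|\bSigma_\A^{-1}\w\|_2 \le \eta^{-1}\|\w\|_2$. I would control $\|\w\|_2^2 = \sum_{j\in\A}(\Z_{\A,j}^T\beps/n)^2$ with a maximal inequality: applying Lemma~\ref{basiclemma} to the variables $X_{ij} = z_{ij}\epsilon_i$, which are mean-zero, independent across $i$, and sub-Gaussian (since $|z_{ij}| < M$ and $\epsilon_i$ has sub-Gaussian tails by ({\bf A5}), giving $\Pr(|z_{ij}\epsilon_i| \ge x) \le 2\exp(-(h/M^2)x^2)$ for $x > Mc$), yields $\max_{j\in\A}|\sqrt{n}\,\Z_{\A,j}^T\beps/n| = O_p(\log^{1/2} q)$, i.e. $\max_{j\in\A}|w_j| = O_p(\sqrt{\log(q)/n})$. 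Summing over the $q$ coordinates gives $\|\w\|_2 = O_p(\sqrt{q\log(q)/n})$, hence $\|\tilde\b_\A - \b_{\lambda\A}\|_2 = O_p(\sqrt{q\log(q)/n})$ on the sign-matching event. That event holds with probability tending to one: since $\|\b^\star_\A - \b_{\lambda\A}\|_\infty \le \|\bSigma_\A^{-1}\w\|_2 = O_p(\sqrt{q\log(q)/n})$ and the second part of ({\bf A6}) forces $\sqrt{q\log(q)/n}/b_{\min} \to 0$, no coordinate of $\b^\star_\A$ can cross zero asymptotically, so $\mathrm{sign}(\b^\star_\A) = \mathrm{sign}(\b_{\lambda\A})$ eventually.

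The step I expect to be the main obstacle is the maximal inequality for $\|\w\|_2$: reducing the $\ell_2$ norm of the $q$-dimensional noise vector to a maximum of sub-Gaussian sums and invoking Lemma~\ref{basiclemma} at the correct dimension $q$ (not $d$), while verifying that the products $z_{ij}\epsilon_i$ inherit sub-Gaussianity from ({\bf A5}). The accompanying subtlety is resolving the circularity between the rate and the sign pattern, which the candidate-plus-uniqueness argument above is designed to sidestep.
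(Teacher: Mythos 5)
Your proof is correct, but it takes a genuinely different route from the paper's. The paper runs the standard M-estimation consistency argument: it shows that the restricted objective $Q(\cdot)$, evaluated on the sphere of radius $m\sqrt{q\log(q)/n}$ around $\b_{\lambda\A}$, exceeds $Q(\b_{\lambda\A})$ with high probability (expanding the $\ell_1$ term exactly for perturbations smaller than $b_{\min}$), and then invokes convexity to trap the minimizer inside the ball. You instead exhibit the explicit candidate $\b^\star_\A = \b_{\lambda\A} + \bSigma_\A^{-1}\Z_\A^T\beps/n$, verify that it satisfies the KKT stationarity equation of the restricted problem exactly with subgradient $\btau_\A$, and use strong convexity (via $\Lambda_{\min}(\bSigma_\A)\geq\eta$) plus the sign-matching event to conclude $\tilde\b_\A = \b^\star_\A$ with probability tending to one. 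Both arguments consume the same probabilistic input --- Lemma~\ref{basiclemma} applied to $\Z_\A^T\beps$ at dimension $q$, with the products $z_{ij}\epsilon_i$ inheriting sub-Gaussianity from ({\bf A5}) --- and the same two uses of ({\bf A6}) (eigenvalue lower bound for the rate, the $q\log(q)/(nb_{\min}^2)\to 0$ condition to protect the signs). What your version buys is an exact closed form for $\tilde\b_\A$ on a high-probability event, which would deliver the linearization $\sqrt{n}(\tilde\b_\A - \b_{\lambda\A}) = \bSigma_\A^{-1}\Z_\A^T\beps/\sqrt{n}$ of Lemma~\ref{1:mainlemma3} for free rather than as a separate asymptotic expansion; the trade-off is that the paper's sphere argument generalizes directly to non-quadratic losses where no such closed form exists. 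You also correctly sidestep the apparent circularity between rate and sign pattern by bounding $\|\b^\star_\A - \b_{\lambda\A}\|_\infty$ directly from the explicit formula before any sign claim is needed.
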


\begin{proof}
To simplify the notation, assume without loss of generality that $a_\lambda = 0$. Otherwise we can replace $\y$ with $\y-a_\lambda\x$, and the proof still holds.

Let $Q(\c) = \|\y-\Z_\A\c\|_2^2/(2n) + \lambda\|\c\|_1$, so that  $\tilde \b_\A = \argmin_{\c \in \mathbb{R}^q} Q(\c)$. Note that $Q(\cdot)$ is strictly convex, and thus $\tilde \b_\A$ is unique, since $\Z_\A$ is full rank by ({\bf A6}). We will show that for all $\xi > 0$ there exists a constant $m$, not depending on $n$, such that
\begin{equation}\label{1:p0}
\lim_{n \rightarrow \infty}\Pr \left[ \inf_{\c : \|\c\|_2 = m}Q(\b_{\lambda\A} +\c\sqrt{q\log(q)/n}) > Q(\b_{\lambda\A})\right] \geq 1-\xi.
\end{equation}
Convexity of $Q$ then implies that $\tilde \b_{\A}$ is in the ball $\{ \b_{\lambda\A} + \c\sqrt{q \log(q)/n} :  \|\c\|_2 \leq m\}$ with probability at least $1-\xi$. Thus, we have $\lim_{n \rightarrow \infty} \Pr[\|\tilde \b_{\A} - \b_{\lambda\A}\|_2 > m \sqrt{q \log(q)/n}]\leq \xi$, i.e.  $\|\tilde \b_{\A} - \b_{\lambda\A}\|_2 = O_p(\sqrt{q \log(q)/n})$.

We now proceed to prove \eqref{1:p0}. Let $\w=\argmin_{\c \in \mathbb{R}^q : \|\c\|_2 = m}Q(\b_{\lambda\A} +\c\sqrt{q\log(q)/n}).$ Expanding terms, we can write
\begin{align}
Q\left(\b_{\lambda\A} +\w\sqrt{q\log(q) /n}\right) -Q (\b_{\lambda\A}) &= -\frac{\sqrt{q \log(q) }}{n^{3/2}}\w^T\Z_\A^T(\y-\Z_\A\b_{\lambda\A}) + \frac{q\log(q)}{2n^2}\w^T\Z_\A^T\Z_\A\w \nonumber \\
&+\lambda\|\b_{\lambda\A} +\w\sqrt{q \log(q) /n}\|_1 - \lambda\|\b_{\lambda\A}\|_1. \label{eq:taylor1}
\end{align}
First note that, for $g,h \in \mathbb{R}$, $|g + h| = |g| + \mathrm{sign}(g)h$ when $|h| \leq |g|$. Thus, we have $\left\|\b_{\lambda\A} + \w\sqrt{q \log(q) /n}\right \|_1 = \|\b_{\lambda\A}\|_1 + \btau_\A^T\w \sqrt{q \log(q)/n}$ when $m\sqrt{q \log(q)/n} < \min\{|b_{\lambda,1}|,\dots,|b_{\lambda,q}|\}=b_{\min}$. Since $\sqrt{q \log(q) /n}/b_{\min} \rightarrow 0$ by ({\bf A6}),  for $n$ large enough we can write 
\begin{align}
Q\left(\b_{\lambda\A} +\w\sqrt{q \log(q) /n}\right) -Q (\b_{\lambda\A}) &=- \frac{\sqrt{q \log(q)}}{n^{3/2}}\w^T\Z_\A^T(\y-\Z_\A\b_{\lambda\A}) + \frac{q \log(q) }{2n^2}\w^T\Z_\A^T\Z_\A\w \nonumber \\
&\quad+\lambda\sqrt{\frac{q \log(q)}{ n}}\btau_\A^T\w \nonumber\\ 
&=- \frac{\sqrt{q \log(q) }}{n^{3/2}}\w^T\left[ \Z_\A^T(\y-\Z_\A\b_{\lambda\A})  - \lambda n\btau_\A\right] \nonumber \\& \quad+ \frac{q \log(q) }{2n^2}\w^T\Z_\A^T\Z_\A\w \label{eq:taylor3} 
\end{align}
We now bound $\w^T\left[ \Z_\A^T(\y-\Z_\A\b_{\lambda\A})  - \lambda n\btau_\A\right]$ in \eqref{eq:taylor3}. Note that $\Z_\A^T(\y-\Z_\A\b_{\lambda\A})- \lambda n\btau_\A =\Z_\A^T\beps$, using \eqref{eq:stationarybl}. Thus we have 
\begin{align}
\left|\w^T\left[ \Z_\A^T(\y-\Z_\A\b_{\lambda\A})  - \lambda n\btau_\A \right] \right|&\leq \|\w\|_1\|\Z_\A^T\beps\|_\infty \nonumber  \\
&\leq \sqrt{q}\|\w\|_2\|\Z_\A^T\beps\|_\infty \label{ml1},
\end{align}
Now note that $\w^T\Z_\A^T\Z_\A\w/n \geq \Lambda_{\min}^2(\bSigma_\A)\|\w\|^2_2$. Thus, we get that
\begin{align*}
Q\left(\b_{\lambda\A} +\w\sqrt{q\log(q)/n}\right) -Q (\b_{\lambda\A}) &\geq -\frac{q \log^{1/2}(q) }{n^{3/2}}\|\w\|_2\|\Z_\A^T\beps\|_\infty  +\frac{q \log(q)}{2n} \Lambda_{\min}^2(\bSigma_\A)\|\w\|^2_2\\ 
&=\frac{q \log^{1/2}(q) m}{n} \left(m\log^{1/2}(q)\Lambda_{\min}^2(\bSigma_\A)/2 - \|\Z_\A^T\beps\|_\infty/\sqrt{n}\right).
\end{align*}
We know $\|\Z_\A^T\beps\|_\infty/\sqrt{n} = O_p(\log^{1/2}(q))$, by Lemma~\ref{basiclemma}, which applies by ({\bf A5}). Thus, we can choose $m$, not depending on $n$, such that \eqref{1:p0} holds, provided that $\Lambda_{\min}(\bSigma_\A)$ is bounded below, which is guaranteed by ({\bf A6}).
\end{proof}

%%%%%%%%%%%%%%%%%%%%%%%%%%%%%%%%%%%%%%%%%%%%%%%%%%%%%%%%%%%%%%%%%%%

\begin{lemma}\label{1:mainlemma2}
Suppose conditions ({\bf A1}) and ({\bf A3}-{\bf A6}) hold. Then any minimizer $\hat \b_\lambda$ of 
\begin{equation}\label{star}
\|\y -a_\lambda\x- \Z\b\|_2^2/(2n) + \lambda\|\b\|_1
\end{equation}
 satisfies $\|\hat \b_{\lambda\A} - \b_{\lambda\A}\|_2 = O_p\left(\sqrt{q\log(q) / n}\right)$ and $\lim_{n\rightarrow \infty} \Pr[ \|\hat \b_{\lambda\A^c}\|_2 = 0 ] = 1$. 
\end{lemma}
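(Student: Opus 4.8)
The plan is to use the primal-dual witness construction: I will show that the restricted estimator $\tilde\b$ of Lemma~\ref{1:mainlemma}, extended by zeros on $\A^c$, is with probability tending to one the unique minimizer of \eqref{star}, so that $\hat\b_\lambda = \tilde\b$ and both conclusions follow immediately from the rate already established in Lemma~\ref{1:mainlemma}. To do this I verify that $\tilde\b$, paired with an appropriate dual vector, satisfies the Karush-Kuhn-Tucker (KKT) conditions of the full problem \eqref{star}, namely $\Z^T(\y - a_\lambda\x - \Z\tilde\b)/n = \lambda\hat\btau$ with $\hat\btau \in \partial\|\tilde\b\|_1$.

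On the active set the stationarity condition holds by construction, provided the sign of $\tilde\b_\A$ agrees with $\btau_\A = \mathrm{sign}(\b_{\lambda\A})$. This sign consistency follows from the $b_{\min}$ assumption: since $\|\tilde\b_\A - \b_{\lambda\A}\|_\infty \le \|\tilde\b_\A - \b_{\lambda\A}\|_2 = O_p(\sqrt{q\log(q)/n})$ by Lemma~\ref{1:mainlemma}, and this rate is $o_p(b_{\min})$ by ({\bf A6}), no active coordinate can cross zero, so $\mathrm{sign}(\tilde b_{\lambda,j}) = \mathrm{sign}(b_{\lambda,j})$ for all $j \in \A$ with probability tending to one, and the active part of the full KKT system reduces to the restricted KKT conditions already satisfied by $\tilde\b_\A$.

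The main obstacle is to verify strict dual feasibility on $\A^c$, i.e. $\|\Z_{\A^c}^T(\y - a_\lambda\x - \Z_\A\tilde\b_\A)/(n\lambda)\|_\infty < 1$ with high probability. I would decompose
\begin{equation*}
\frac{1}{n\lambda}\Z_{\A^c}^T(\y - a_\lambda\x - \Z_\A\tilde\b_\A) = \btau_{\A^c} + \frac{1}{n\lambda}\Z_{\A^c}^T\beps - \frac{1}{n\lambda}\Z_{\A^c}^T\Z_\A(\tilde\b_\A - \b_{\lambda\A}),
\end{equation*}
where the leading term comes from the population stationary condition \eqref{eq:stationarybl} (using $\Z_\A\b_{\lambda\A} = \Z\b_\lambda$) and, because $\x$ and $\Z$ are fixed under ({\bf A1}), the only randomness in the first piece is the noise term $\Z_{\A^c}^T\beps/(n\lambda)$. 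By ({\bf A3}) the leading term satisfies $\|\btau_{\A^c}\|_\infty \le 1 - \delta$. Applying Lemma~\ref{basiclemma} with $X_{ij} = z_{ij}\epsilon_i$, which is mean-zero, sub-Gaussian and bounded under ({\bf A5}), gives $\|\Z_{\A^c}^T\beps/n\|_\infty = O_p(\sqrt{\log(d)/n})$, which vanishes by ({\bf A6}). Finally, bounding the bias term by $\|\Z_{\A^c}^T\Z_\A/n\|_\infty\,\|\tilde\b_\A - \b_{\lambda\A}\|_\infty$ and combining ({\bf A4}) with the Lemma~\ref{1:mainlemma} rate yields $o(\sqrt{n/(q\log q)})\cdot O_p(\sqrt{q\log(q)/n}) = o_p(1)$. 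Since $\lambda$ is fixed, summing these three contributions produces a supremum bounded by $1 - \delta + o_p(1) < 1$ with probability tending to one.

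Having established strict dual feasibility, complementary slackness forces every minimizer of \eqref{star} to vanish on $\A^c$, which gives $\lim_{n\to\infty}\Pr[\|\hat\b_{\lambda\A^c}\|_2 = 0] = 1$. Restricted to $\A$ the objective is strictly convex because $\Z_\A$ has full column rank by the eigenvalue bound in ({\bf A6}), so the minimizer is unique and coincides with $\tilde\b_\A$; the rate $\|\hat\b_{\lambda\A} - \b_{\lambda\A}\|_2 = O_p(\sqrt{q\log(q)/n})$ is then inherited directly from Lemma~\ref{1:mainlemma}. The delicate point throughout is controlling the noise and bias terms simultaneously within the margin $\delta$ afforded by ({\bf A3}); the growth conditions in ({\bf A4}) and ({\bf A6}) are calibrated precisely so that both remain $o_p(1)$.
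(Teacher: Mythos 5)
Your proposal is correct and follows essentially the same primal--dual witness argument as the paper: verify the KKT conditions for the restricted estimator $\tilde\b$ of Lemma~\ref{1:mainlemma}, establish strict dual feasibility on $\A^c$ via the same three-term decomposition controlled by ({\bf A3}), Lemma~\ref{basiclemma}, and ({\bf A4}), and conclude uniqueness from the full column rank of $\Z_\A$ guaranteed by ({\bf A6}). The only detail you elide is that deducing $\hat\b_{\lambda\A^c}={\bf 0}$ for \emph{every} minimizer from strict dual feasibility at the single witness $\tilde\b$ relies on the fact that all lasso minimizers share the same fitted values, which the paper invokes explicitly by citing \citet{tibshirani2013lasso}.
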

\begin{proof}
As in Lemma~\ref{1:mainlemma}, assume without loss of generality that $a_\lambda = 0$.

It suffices to show that $\tilde \b$, from Lemma~\ref{1:mainlemma}, is the unique minimizer of \eqref{star} with probability tending to 1, since this implies that $\lim_{n\rightarrow \infty} \Pr[ \|\hat \b_{\lambda\A^c}\|_2 = 0 ] = 1$ and that $[n/ (q \log q)]^{1/2}\|\hat \b_{\lambda\A} - \b_{\lambda\A}\|_2 = [n/ (q \log q)]^{1/2}\|\tilde \b_\A - \b_{\lambda\A}\|_2 + o_p(1)$. 

By the Karush-Kuhn-Tucker conditions, $\tilde \b$ is a minimizer of \eqref{star} if and only if $\Z^T(\y-\Z\tilde\b)/n = \lambda\tilde \btau$ for some $\tilde \btau$ satisfying $\|\tilde \btau\|_\infty \leq 1$ and $\tilde \tau_i = \mathrm{sign}(\tilde b_i)$ for $\tilde b_i \neq 0$. Since we already know that $\|\Z_\A^T(\y - \Z\tilde \b)/n\|_\infty \leq \lambda$ and $\z^T_i(\y-\Z\tilde\b)/n = \lambda \mathrm{sign}(\tilde b_i)$ for $\tilde b_i \neq 0$ (by the definition of $\tilde \b$), if we can show that
\begin{align}\label{1:p1}
\lim_{n \rightarrow \infty}\Pr\left[\frac{1}{n}\left \| \Z_{\A^c}^T(\y- \Z\tilde \b)\right\|_\infty < \lambda\right] =1,
\end{align}
then we will have shown that $\tilde \b$ is a minimizer of \eqref{star} with probability tending to 1. Furthermore, when $\| \Z_{\A^c}^T(\y- \Z\tilde \b)/n\|_\infty < \lambda$ holds then $\tilde \b$ is the unique minimizer of $\|\y-\Z\b\|_2^2/(2n) + \lambda\|\b\|_1$. To see this note that \emph{all} minimizers of \eqref{star} produce the same fitted values \citep{tibshirani2013lasso}. Thus, if $\| \Z_{\A^c}^T(\y- \Z\tilde \b)/n\|_\infty < \lambda$, then $\| \Z_{\A^c}^T(\y- \Z\hat\b_\lambda)/n\|_\infty < \lambda$ for any minimizer $\hat \b_\lambda$ of \eqref{star}, which implies that $\|\hat \b_{\lambda\A^c}\|_2 = 0$, by the Karush-Kuhn-Tucker conditions.  When $\|\hat \b_{\lambda\A^c}\|_2 = 0$, then $\hat \b_\lambda=\tilde \b$, which is unique, as was argued in the proof of Lemma~\ref{1:mainlemma}.

We now show that \eqref{1:p1} holds. Adding and subtracting $\Z_{\A^c}^T\Z_\A\b_{\lambda\A}/n$ we get
\begin{align}
\frac{1}{n}\|\Z_{\A^c}^T(\y- \Z\tilde \b)\|_\infty & \leq \frac{1}{n} \|\Z_{\A^c}^T(\y- \Z_\A\b_{\lambda\A})\|_\infty +  \frac{1}{n}\|\Z_{\A^c}^T\Z_\A(\b_{\lambda \A}-\tilde \b_{\A} )\|_\infty \label{ml3}.
\end{align}
First, we bound $(1/n)\|\Z_{\A^c}^T\Z_\A(\b_{\lambda \A}-\tilde \b_{\A} )\|_\infty$ in \eqref{ml3}. By ({\bf A4}) and Lemma~\ref{1:mainlemma}, we have 
\begin{align}
\frac{1}{n}\left\|\Z_{\A^c}^T\Z_\A(\tilde \b_{\A} - \b_{\lambda \A})\right\|_\infty & \leq\frac{1}{n}\left \|\Z_{\A^c}^T\Z_\A\right\|_\infty\| \tilde \b_{\A} - \b_{\lambda \A}\|_\infty  \nonumber\\
& \leq \frac{1}{n}\left\|\Z_{\A^c}^T\Z_\A\right\|_\infty\| \tilde \b_{\A} - \b_{\lambda \A}\|_2 \nonumber \\
& = o_p(1) \label{ml4}.
\end{align}
We now bound  $(1/n) \|\Z_{\A^c}^T(\y- \Z_\A\b_{\lambda\A})\|_\infty$ in \eqref{ml3}. Recall that $\Z^T(\y- \Z\b_\lambda) - n\lambda \btau= \Z^T\beps$. Using Lemma~\ref{basiclemma} to bound $\|\Z_{\A^c}^T\beps\|_\infty/n$ we get that 
\begin{align}
\frac{1}{n}\|\Z_{\A^c}^T(\y- \Z\b_\lambda)\|_\infty & = \frac{1}{n}\|\Z_{\A^c}^T(\y- \Z\b_\lambda) - n\lambda \btau_{\A^c} +n\lambda \btau_{\A^c}\|_\infty \nonumber \\ 
& \leq  \frac{1}{n}\|\Z_{\A^c}^T\beps\|_\infty  + \lambda \|\btau_{\A^c}\|_\infty \nonumber \\
&\leq O_p\left(\frac{\log^{1/2}(d-q)}{n^{1/2}} \right) + \lambda (1-\delta) \nonumber \\
& = o_p(1) + \lambda (1-\delta)  \label{ml5},
\end{align}
where we used $\|\btau_{\A^c}\|_\infty < 1-\delta$, by ({\bf A3}), and $\log^{1/2}(d-q)/n^{1/2} \rightarrow 0$, by ({\bf A6}).

Altogether, applying the bounds \eqref{ml4} and \eqref{ml5} to  \eqref{ml3}, we have 
\begin{align*}
\left \| (1/n)\Z_{\A^c}^T(\y- \Z\tilde \b)\right\|_\infty \leq \lambda (1-\delta)  + o_p(1),
\end{align*}
which is smaller than $\lambda$ with probability tending to 1. Thus, \eqref{1:p1} holds.
\end{proof}

%%%%%%%%%%%%%%%%%%%%%%%%%%%%%%%%%%%%%%%%%%%%%%%%%%%%%%%%%%%%%%%%%%%%%%%
%%%%%%%%%%%%%%%%%%%%%%%%%%%%%%%%%%%%%%%%%%%%%%%%%%%%%%%%%%%%%%%

\begin{lemma}\label{1:mainlemma3}
Suppose conditions ({\bf A1}) and ({\bf A3}-{\bf A6}) hold. Then any minimizer $\hat \b_\lambda$ of $\|\y-a_\lambda\x - \Z\b\|_2^2/(2n) + \lambda\|\b\|_1$ satisfies
$$ \sqrt{n}(\hat \b_{\lambda\A} - \b_{\lambda\A}) = \frac{1}{\sqrt{n}}\bSigma_\A^{-1}\Z_\A^T\beps + o_p(1).$$
\end{lemma}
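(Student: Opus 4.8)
The plan is to reduce everything to the oracle restricted estimator $\tilde\b$ introduced in Lemma~\ref{1:mainlemma}, for which the expansion can be obtained exactly by differencing two stationarity conditions. As in the preceding lemmas I would first assume $a_\lambda = 0$ without loss of generality (otherwise replace $\y$ by $\y - a_\lambda\x$). By Lemma~\ref{1:mainlemma2}, with probability tending to one $\tilde\b$ is the unique minimizer of \eqref{star}, so any minimizer $\hat\b_\lambda$ coincides with $\tilde\b$ on an event $E_n$ with $\Pr(E_n)\to 1$; in particular $\hat\b_{\lambda\A} = \tilde\b_\A$ there. It therefore suffices to produce the stated linearization for $\tilde\b_\A$ and then absorb everything off $E_n$ into the $o_p(1)$ term.

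The second step is sign recovery. Lemma~\ref{1:mainlemma} gives $\|\tilde\b_\A - \b_{\lambda\A}\|_2 = O_p(\sqrt{q\log(q)/n})$, and condition ({\bf A6}) forces $\sqrt{q\log(q)/n}/b_{\min}\to 0$, so the estimation error is $o_p(b_{\min})$. Consequently, with probability tending to one every coordinate of $\tilde\b_\A$ is nonzero and shares the sign of the corresponding coordinate of $\b_{\lambda\A}$, i.e. $\mathrm{sign}(\tilde\b_\A) = \btau_\A$. On this event the $\ell_1$ penalty is locally smooth at $\tilde\b_\A$, so the Karush--Kuhn--Tucker condition for the restricted problem reads $\tfrac{1}{n}\Z_\A^T(\y - \Z_\A\tilde\b_\A) = \lambda\btau_\A$.

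The third step is to difference this against the population stationary condition. Rewriting \eqref{eq:stationarybl} on the active block exactly as in Lemma~\ref{1:mainlemma} gives $\Z_\A^T(\y - \Z_\A\b_{\lambda\A}) - n\lambda\btau_\A = \Z_\A^T\beps$. Subtracting the sample condition from this identity cancels the $n\lambda\btau_\A$ terms and leaves $\Z_\A^T\Z_\A(\tilde\b_\A - \b_{\lambda\A}) = \Z_\A^T\beps$. Since $\bSigma_\A = \Z_\A^T\Z_\A/n$ is invertible by ({\bf A6}), this yields the exact identity $\sqrt{n}(\tilde\b_\A - \b_{\lambda\A}) = \tfrac{1}{\sqrt{n}}\bSigma_\A^{-1}\Z_\A^T\beps$ on the sign-recovery event. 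Because this identity holds on an event of probability tending to one, the difference between the two sides of the claimed expansion is identically zero there and hence $o_p(1)$, which completes the argument.

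The only genuinely delicate point is the sign-recovery step: it is what guarantees that the soft-thresholding inherent in the lasso does not kill, or flip the sign of, any truly active coordinate, so that $\tilde\b_\A$ actually satisfies a smooth stationarity equation rather than a subgradient inclusion. This is exactly where the separation built into ({\bf A6}), namely that $b_{\min}$ dominates the $\sqrt{q\log(q)/n}$ estimation rate, is essential; everything after that is exact linear algebra and the cancellation of the penalty terms, so no further probabilistic control (beyond what Lemmas~\ref{1:mainlemma} and~\ref{1:mainlemma2} already supply) is needed.
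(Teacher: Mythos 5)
Your proof is correct and follows essentially the same route as the paper's: sign recovery for the active block from the $O_p(\sqrt{q\log(q)/n})$ rate together with ({\bf A6}), then differencing the sample Karush--Kuhn--Tucker condition against the population identity $\Z_\A^T(\y-\Z_\A\b_{\lambda\A})-n\lambda\btau_\A=\Z_\A^T\beps$, and invoking Lemma~\ref{1:mainlemma2} to dispose of the inactive coordinates. The only (harmless) stylistic difference is that you obtain an exact identity on a high-probability event by working through $\tilde\b$, whereas the paper works with $\hat\b_\lambda$ directly and carries the corresponding terms as $o_p(1)$ remainders.
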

\begin{proof}
As in Lemma~\ref{1:mainlemma}, assume without loss of generality that $a_\lambda = 0$. 

By the stationary conditions defining $\hat \b_\lambda$ we have that
\begin{equation}\label{eq:1:stationaryb}
\Z^T(\y-\Z\hat\b_\lambda)/n = \lambda\hat \btau,
\end{equation}
for some $\hat\btau \in [-1,1]^d$. First, we show that $\Pr[\hat \btau_\A =\btau_\A] \rightarrow 1$ as $n \rightarrow \infty$ (recall from \eqref{eq:stationarybl} that $\btau_\A = \mathrm{sign}(\b_{\lambda\A})$).  By Lemma~\ref{1:mainlemma2}, given $\xi >0$, there exists a $c>0$ such that $\lim _{n \rightarrow \infty}\Pr\left[\|\hat\b_{\lambda\A} -\b_{\lambda\A}\|_2 < c\sqrt{q \log(q)/n}\right] > 1-\xi$. Further, by ({\bf A6}), we have $c\sqrt{q\log(q)/n}<  b_{\min}$ for $n$ large enough. Now, $ b_{\min} > c\sqrt{q\log(q)/n}> \|\hat \b_{\lambda\A} - \b_{\lambda\A}\|_2\geq  \|\hat \b_{\lambda\A} - \b_{\lambda\A}\|_\infty$ implies that all elements of $\hat \b_{\lambda\A}$ are non-zero and that  $\mathrm{sign}(\hat \b_{\lambda\A}) = \mathrm{sign}(\b_{\lambda\A})$.  Thus, since $\hat \btau_\A =\mathrm{sign}(\hat \b_{\lambda\A})$ when all elements of $\hat \b_{\lambda\A} $ are non-zero, we have $\lim_{n\rightarrow \infty}\Pr[ \hat \btau_\A = \btau_\A] > 1-\xi$. Since $\xi$ is arbitrary, we have $\lim_{n\rightarrow \infty}\Pr[\hat \btau_\A =\btau_\A] = 1$.

Thus, from \eqref{eq:1:stationaryb}, we can write
\begin{align}
{\bf 0} &= \Z_\A^T(\y-\Z\hat\b_\lambda) /\sqrt{n} - \sqrt{n}\lambda \hat \btau_\A \nonumber \\
&=  \Z_\A^T(\y-\Z\hat \b_\lambda) /\sqrt{n} - \sqrt{n}\lambda\btau_\A+ o_p(1) \nonumber \\
&= \Z_\A^T(\y-\Z\b_\lambda) /\sqrt{n} - \sqrt{n}\lambda \btau_\A - \frac{1}{n}\Z_\A^T\Z_\A\sqrt{n}(\hat \b_{\lambda\A} - \b_{\lambda\A}) \nonumber \\
&\qquad- \frac{1}{n}\Z_\A^T\Z_{\A^c}\sqrt{n}\hat \b_{\lambda\A^c} + o_p(1) \label{lemma3.1}. 
\end{align}
Now, from Lemma~\ref{1:mainlemma2} we have that $\Pr[\|\hat \b_{\lambda\A^c}\|_2 =0] \rightarrow 1$.  Also, from \eqref{eq:stationarybl} we know that $\Z_\A^T(\y-\Z\b_\lambda) /\sqrt{n} - \sqrt{n}\lambda \btau_\A = \Z_\A^T\beps/\sqrt{n}$.
Thus,  we can write
\begin{equation*}
{\bf 0}=\Z_\A^T\beps/\sqrt{n} - \frac{1}{n}\Z_\A^T\Z_\A\sqrt{n}(\hat \b_{\lambda\A} - \b_{\lambda\A}) + o_p(1).
\end{equation*}
Multiplying through by $\bSigma_\A^{-1}$ gives the result.
\end{proof}

%%%%%%%%%%%%%%%%%%%%%%%%%%%%%%%%%%%%%%%%%%%%%%%%%%%%%%%%%%%%%%%%%%%%%%%%%%%%%%%%%%%%%%%%%%%%%%%%%%%%%%%%%%%%%%%%%%%%%%%%%%%%%%%%%%%%%%

With these lemmas, we can now prove our main result.
\begin{proof}[Proof of Proposition~\ref{thm:lassodist}]
Recall that $\hat \b^0_\lambda = \argmin_\b\{\|\y-\Z\b\|_2^2/(2n) + \lambda \|\b\|_1 \}$. Since $a_\lambda=0$, we have that $\hat \b_\lambda^0 =\hat \b_\lambda $, using the notation of Lemmas~\ref{1:mainlemma2} and \ref{1:mainlemma3}.

First, we have
\begin{align*}
T_\lambda &= \frac{1}{\sqrt{n}} \x^T(\y-\Z\hat\b_\lambda)  \\ 
&=\frac{1}{\sqrt{n}} \x^T(\y-\Z\b_\lambda) -\frac{1}{n}\x^T\Z_\A\sqrt{n}(\hat \b_{\lambda\A} - \b_{\lambda\A}) - \frac{1}{\sqrt{n}} \x^T\Z_{\A^c}\hat \b_{\lambda\A^c}.
\end{align*}
Now, $\Pr[\|\hat \b_{\lambda\A^c}\|_2=0] \rightarrow 1$ by Lemma~\ref{1:mainlemma2}. Thus, we get
$$
T_\lambda =\frac{1}{\sqrt{n}} \x^T(\y-\Z\b_\lambda) -  \bsigma^T_{x\A}\sqrt{n}(\hat \b_{\lambda\A} - \b_{\lambda\A}) +o_p(1),
$$
where $\bsigma_{x\A} = \Z_\A^T\x/n$.

Now, using Lemma~\ref{1:mainlemma3}, we know  
$$
\sqrt{n}\bsigma_{x\A}^T(\hat \b_{\lambda\A} - \b_{\lambda\A})=\frac{1}{\sqrt{n}}\bsigma_{x\A}^T\bSigma_\A^{-1} \Z_\A^T\beps  + o_p(1).
$$ Also, since $H_{0\lambda}: a_\lambda =0$ is true, we have $\x^T(\y-\Z\b_\lambda) = \x^T\beps$, and so 
$$
T_\lambda= \frac{1}{\sqrt{n}} \left(\x^T\beps-\bsigma_{x\A}^T\bSigma_\A^{-1}\Z_\A^T\beps\right)  +o_p(1)= \frac{1}{\sqrt{n}}\x^T({\bf I}_n - {\bf P}_\A)\beps  + o_p(1).
$$ Dividing by $\sigma_\beps\sqrt{\x^T({\bf I}_n - {\bf P}_\A)\x/n}$ we get
\begin{align*}
\frac{T_\lambda}{\sigma_\beps\sqrt{\x^T({\bf I}_n - {\bf P}_\A)\x/n}} &= \frac{{\bf r}^T\beps}{\sigma_\beps\|{\bf r}\|_2} + o_p(1),
\end{align*}
where  ${\bf r}^T = \x^T({\bf I}_n - {\bf P}_\A)$. Now, the Lindeberg-Feller Central Limit Theorem guarantees that $T_\lambda/\left(\sigma_\beps\sqrt{\x^T({\bf I}_n - {\bf P}_\A)\x/n}\right) \rightarrow_d N(0,1)$ if the Lindeberg condition holds: 
$$
\lim_{n \rightarrow \infty} \sum_{i=1}^n\E \left[\, \frac{ (r_i \epsilon_i)^2}{\sigma_\beps^2\|{\bf r}\|^2_2} 1\left\{ \frac{|r_i \epsilon_i|}{\sigma_\beps\|{\bf r}\|_2} >  \eta \right \} \,\right]= 0, \quad \forall \eta > 0.
$$
Using that $|r_i| \leq \|{\bf r}\|_\infty$, and that the $\epsilon_i$'s are identically distributed, we get
\begin{align*} 
\sum_{i=1}^n&\E \left[\, \frac{ (r_i \epsilon_i)^2}{\sigma_\beps^2\|{\bf r}\|^2_2} 1\left\{ \frac{|r_i \epsilon_i|}{\sigma_\beps\|{\bf r}\|_2} >  \eta \right \} \,\right] \\
&\leq \sum_{i=1}^n \frac{r_i^2 }{\sigma_\beps^2\|{\bf r} \|_2^2} \E\left[ \epsilon_i^21\left\{ \frac{| \epsilon_i|\|{\bf r}\|_\infty }{\sigma_\beps\|{\bf r}\|_2 } >  \eta \right \} \right] \\
& = \frac{1}{\sigma_\beps^2} \E\left[ \epsilon_1^2 1\left\{ \frac{| \epsilon_1|\|{\bf r}\|_\infty }{\sigma_\beps\|{\bf r}\|_2 } >  \eta \right \} \right].
\end{align*}
Now, since $\|{\bf r}\|_\infty/\|{\bf r}\|_2\rightarrow 0$ by ({\bf A2}), we have that $\epsilon_1^2 1\left\{ | \epsilon_1|\|{\bf r}\|_\infty/( \sigma_\beps\|{\bf r}\|_2)  >  \eta \right \} \rightarrow_p 0$. Thus we can apply the Dominated Convergence Theorem, using the dominating random variable $\epsilon_1^2$, which satisfies $\E[\epsilon_1^2] =\sigma^2_\epsilon < \infty$ and  $\epsilon_1^2 \geq \epsilon_1^2 1\left\{ | \epsilon_1|\|{\bf r}\|_\infty/( \sigma_\beps\|{\bf r}\|_2)  >  \eta \right \}$ with probability 1, to get that 
$$\frac{1}{\sigma_\beps^2} \E\left[ \epsilon_1^2 1\left\{ \frac{| \epsilon_1|\|{\bf r}\|_\infty }{\sigma_\beps\|{\bf r}\|_2 } >  \eta \right \} \right]\rightarrow 0,$$
which in turn gives the result.
\end{proof}

\bibliographystyle{apalike}

\end{document}